\documentclass[10pt]{article} % For LaTeX2e
% \usepackage{tmlr}
% If accepted, instead use the following line for the camera-ready submission:
\usepackage[accepted]{tmlr}
% To de-anonymize and remove mentions to TMLR (for example for posting to preprint servers), instead use the following:
%\usepackage[preprint]{tmlr}

%% >> Added by Author to use Japanese
% \usepackage[whole]{bxcjkjatype}
%% << Added by Author to use Japanese

% Optional math commands from https://github.com/goodfeli/dlbook_notation.
%%%%% NEW MATH DEFINITIONS %%%%%

\usepackage{amsmath,amsfonts,bm}

% Mark sections of captions for referring to divisions of figures

% Highlight a newly defined term

% Figure reference, lower-case.

% Figure reference, capital. For start of sentence

% Section reference, lower-case.

% Section reference, capital.

% Reference to two sections.

% Reference to three sections.

% Reference to an equation, lower-case.
\def\eqref#1{equation~\ref{#1}}
% Reference to an equation, upper case

% A raw reference to an equation---avoid using if possible

% Reference to a chapter, lower-case.

% Reference to an equation, upper case.

% Reference to a range of chapters

% Reference to an algorithm, lower-case.

% Reference to an algorithm, upper case.

% Reference to a part, lower case

% Reference to a part, upper case

\def\1{\bm{1}}

% Random variables

% rm is already a command, just don't name any random variables m

% Random vectors

% Elements of random vectors

% Random matrices

% Elements of random matrices

% Vectors

% Elements of vectors

% Matrix

% Tensor
\DeclareMathAlphabet{\mathsfit}{\encodingdefault}{\sfdefault}{m}{sl}
\SetMathAlphabet{\mathsfit}{bold}{\encodingdefault}{\sfdefault}{bx}{n}

% Graph

% Sets

% Don't use a set called E, because this would be the same as our symbol
% for expectation.

% Entries of a matrix

% entries of a tensor
% Same font as tensor, without \bm wrapper

% The true underlying data generating distribution

% The empirical distribution defined by the training set

% The model distribution

% Stochastic autoencoder distributions

 % Laplace distribution

% Wolfram Mathworld says $L^2$ is for function spaces and $\ell^2$ is for vectors
% But then they seem to use $L^2$ for vectors throughout the site, and so does
% wikipedia.

 % See usage in notation.tex. Chosen to match Daphne's book.

\usepackage{hyperref}
\usepackage{url}

% >>>>>>>>>>>>>>>> Packages Author added.
\usepackage{bm}
\usepackage{bbm}

\usepackage{graphicx}

\usepackage{amsmath}
\usepackage{amssymb}
\usepackage{mathtools}
\usepackage{amsthm}

\usepackage{algorithm}
\usepackage[noend]{algpseudocode}
\algrenewcommand\algorithmicdo{}

\usepackage[capitalize,noabbrev]{cleveref}

\usepackage{booktabs}

\theoremstyle{plain}
\newtheorem{theorem}{Theorem}[section]

\newtheorem{lemma}[theorem]{Lemma}

\theoremstyle{definition}
\newtheorem{definition}[theorem]{Definition}
\newtheorem{assumption}[theorem]{Assumption}
\theoremstyle{remark}

% 修正表示ON
% \newcommand{\revise}[1]{\textcolor{red}{#1}}
% 修正表示OFF
\newcommand{\revise}[1]{#1}

% <<<<<<<<<<<<<<<< Packages Author added.

\title{PCF Learned Sort: a Learning Augmented Sort Algorithm\\with $\mathcal{O}(n \log\log n)$ Expected Complexity}

% Authors must not appear in the submitted version. They should be hidden
% as long as the tmlr package is used without the [accepted] or [preprint] options.
% Non-anonymous submissions will be rejected without review.

\author{\name Atsuki Sato \email a\_sato@hal.t.u-tokyo.ac.jp \\
      \addr Graduate School of Information Science and Technology\\
      The University of Tokyo
      \AND
      \name Yusuke Matsui \email matsui@hal.t.u-tokyo.ac.jp \\
      \addr Graduate School of Information Science and Technology\\
      The University of Tokyo}

% The \author macro works with any number of authors. Use \AND 
% to separate the names and addresses of multiple authors.

% \newcommand{\fix}{\marginpar{FIX}}
% \newcommand{\new}{\marginpar{NEW}}

  % Insert correct month for camera-ready version
 % Insert correct year for camera-ready version
 % Insert correct link to OpenReview for camera-ready version

\begin{document}

\maketitle

\begin{abstract}
Sorting is one of the most fundamental algorithms in computer science.
Recently, Learned Sorts, which use machine learning to improve sorting speed, have attracted attention.
While existing studies show that Learned Sort is empirically faster than classical sorting algorithms, they do not provide theoretical guarantees about its computational complexity.
We propose Piecewise Constant Function (PCF) Learned Sort, a theoretically guaranteed Learned Sort algorithm.
We prove that the expected complexity of PCF Learned Sort is $\mathcal{O}(n \log \log n)$ under mild assumptions on the data distribution.
We also confirm empirically that PCF Learned Sort has a computational complexity of $\mathcal{O}(n \log \log n)$ on both synthetic and real datasets.
This is the first study to theoretically support the empirical success of Learned Sort, and provides evidence for why Learned Sort is fast.
\end{abstract}

\section{Introduction}
\label{sec: introduction}
Sorting is one of the most fundamental algorithms in computer science and has been extensively studied for many years.
Recently, a novel sorting method called Learned Sort has been proposed~\citep{kraska2021sagedb}.
In Learned Sort, a machine learning model is trained to estimate the distribution of elements in the input array, specifically, the cumulative distribution function (CDF).
The model's predictions are then used to rearrange the elements, followed by a minor refinement step to complete the sorting process.
Empirical results show that Learned Sort is faster than classical sorting algorithms, including highly optimized counting-based sorting algorithms, comparison sorting algorithms, and hybrid sorting algorithms.

On the other hand, there are few theoretical guarantees regarding the computational complexity of Learned Sort.
The first proposed Learned Sort algorithm~\citep{kraska2021sagedb} has a best-case complexity of $\mathcal{O}(n)$, but its expected or worst-case complexity is not discussed.
The more efficient Learned Sort algorithms proposed later~\citep{kristo2020case, kristo2021defeating} also have $\mathcal{O}(n)$ best-case complexity, but $\mathcal{O}(n^2)$ worst-case complexity (or $\mathcal{O}(n \log n)$ with some modifications).
The goal of this paper is to develop a Learned Sort that is theoretically guaranteed to be computationally efficient.

\revise{We propose Piecewise Constant Function (PCF) Learned Sort, which can sort with an expected complexity $\mathcal{O}(n\log\log n)$ under mild assumptions on the data distribution.}
\revise{In addition, we show that PCF Learned Sort admits a worst-case complexity guarantee that depends on the choice of the internal sorting algorithm.
For instance, if MergeSort---whose worst-case complexity is $\mathcal{O}(n\log n)$---is used, then the worst-case complexity of PCF Learned Sort is $\mathcal{O}(n\log n)$.
If we instead use the algorithm by \citet{han2020sorting}, which has a worst-case complexity of $\mathcal{O}(n\sqrt{\log n})$, then the worst-case complexity of PCF Learned Sort becomes $\mathcal{O}(n\sqrt{\log n})$.}
We then empirically confirm that our Learned Sort can sort with a complexity of $\mathcal{O}(n\log\log n)$ on both synthetic and real datasets.

Independently and concurrently, \citet{zeighami2024theoretical} explored complexity-guaranteed Learned Sort.
While our PCF Learned Sort is motivated by similar principles and incorporates comparable design choices, several key distinctions exist between our approach and theirs.
A comprehensive comparison between our method and that of \citet{zeighami2024theoretical} is provided in \cref{sec: discussion}.

This paper is organized as follows.
\cref{sec: related work} reviews related literature.
\cref{sec: methods} introduces PCF Learned Sort and provides complexity theorems with proof sketches.
\cref{sec: experiments} provides empirical validation of the theoretical results.
We discuss implications and limitations in \cref{sec: discussion}, and conclude in \cref{sec: conclusion}.

\section{Related Work}
\label{sec: related work}
Our research is in the context of algorithms with machine learning (\cref{sec: Algorithm with Machine Learning}).
There are two types of sorting algorithms, comparison sorts (\cref{sec: Comparison sorts}) and non-comparison sorts (\cref{sec: Non-Comparison sorts}), and our proposed method is a non-comparison sort.
However, the idea, implementation, and proof of computational complexity of our method are similar to those of sample sort, which is a type of comparison sort.
Furthermore, our proposed algorithm and proof of computational complexity are based on those of Learned Index (\cref{sec: Learned Index}).

\subsection{Algorithms with Machine Learning}
\label{sec: Algorithm with Machine Learning}

Our work lies in the emerging area of Learned Sort, where machine learning models accelerate sorting. 
The first approach by \citet{kraska2021sagedb} trained a model $\tilde{F}(q)$ to approximate the CDF $F(q)$ and placed each element at position $n\tilde{F}(x)$, and then applies insertion sort to complete the ordering. 
While this yields $\mathcal{O}(n)$ in favorable cases, the insertion-based refinement causes a $\mathcal{O}(n^2)$ worst case. 
Later implementations improved cache efficiency~\citep{kristo2020case}, robustness to duplicates~\citep{kristo2021defeating}, and even introduced cache-friendly CDF models tailored for sorting~\citep{ferragina2025balancedsort}, but still relied on insertion sort and thus retained a quadratic worst-case complexity.
Using stronger fallback algorithms such as Introsort~\citep{musser1997introspective} or TimSort~\citep{mcilroy1993optimistic} reduces this to $\mathcal{O}(n \log n)$, though this only matches classical comparison-based sorts and does not fully explain the empirical advantage of Learned Sort.

\revise{
Parallelization has also been explored: \citet{carvalho2022towards} proposed IPLS, integrating learned partitioning into IPS4o~\citep{axtmann2017}, and \citet{carvalho2023learnedsort} framed LearnedSort as a sample sort with a parallel IPS4o implementation. 
These studies primarily highlight the engineering potential of learned sorting in parallel settings, which 
is complementary but orthogonal to our focus on providing theoretical guarantees for the sequential case.
}

\revise{
In a broader context, \citet{li2020strong} showed that neural programs can achieve strong generalization and replicate efficient algorithmic behaviors such as sorting, demonstrating that machine learning can recover algorithmic complexity classes like $\mathcal{O}(n \log n)$ without explicit manual design. 
This provides a useful perspective on the generalizability of learned algorithms.
}

Related to this, the field of \emph{algorithms with predictions} studies how machine learning predictions accelerate classical algorithms~\citep{mitzenmacher2022algorithms}, with applications in caching~\citep{narayanan2018deepcache,rohatgi2020near,lykouris2021competitive,im2022parsimonious}, ski rental~\citep{purohit2018improving, gollapudi2019online, shin2023improved}, scheduling~\citep{gollapudi2019online, lattanzi2020online, lassota2023minimalistic}, and matching~\citep{antoniadis2023online,dinitz2021faster,sakaue2022discrete}.
Sorting with predictions has also been analyzed~\citep{lu2021generalized,chan2023generalized,erlebach2023sorting}, including tight guarantees by \citet{bai2023sorting}. 
In the context of algorithms with predictions, machine learning predictions are typically assumed to be available at no cost, and the models are treated as an opaque box.
This exclusion contrasts with our problem setting, where we ensure that the computational complexity covers the entire process from receiving an unsorted array to returning a sorted array.

\subsection{Comparison Sorts}
\label{sec: Comparison sorts}
Sorting algorithms that use comparisons between keys and require no other information about the keys are called comparison sorts.
It is well-known that the worst-case complexity of a comparison sort is at least $\Omega(n\log n)$.
Commonly used comparison sorting algorithms include QuickSort, heap sort, merge sort, and insertion sort.
The GNU Standard Template Library in C++ uses Introsort~\citep{musser1997introspective}, an algorithm that combines QuickSort, heap sort, and insertion sort.
Java~\citep{javalistsort2023} and Python up to version 3.10~\citep{python_timsort2002} use TimSort~\citep{mcilroy1993optimistic}, an improved version of merge sort.
Python 3.11 and later use Powersort~\citep{munro2018nearly}, a merge sort that determines a near-optimal merge order.

Sample sort~\citep{frazer1970samplesort} extends QuickSort by using multiple pivots instead of one.
Sample sort samples a small number of keys from the array, determines multiple pivots, and uses them to partition the array into multiple buckets.
The partitioning is repeated recursively until the array is sufficiently small.
Among its implementations, the in-place parallel superscalar sample sort~\citep{axtmann2017} was introduced as a highly efficient parallel algorithm, and later engineering refinements further improved its performance~\citep{axtmann2022engineering}.
Its computational and cache efficiency is theoretically guaranteed by a theorem about the probability that the pivots partition the array (nearly) equally.

\subsection{Non-Comparison Sorts}
\label{sec: Non-Comparison sorts}
Non-comparison sorts use information other than key comparison.
Radix sort and counting sort are the most common types of non-comparison sorts.
Radix sort uses counting sort as a subroutine for each digit.
When the number of digits in the array element is $w$, the computational complexity of radix sort is $\mathcal{O}(wn)$.
Thus, radix sort is particularly effective when the number of digits is small.
There are several variants of radix sort, such as Spreadsort~\citep{ross2002spreadsort}, which integrates the advantages of comparison sort into radix sort and is implemented in the Boost C++ Libraries, and RegionSort~\citep{obeya2019theoretically}, which enables efficient parallelization by modeling and resolving dependencies among the element swaps.

In addition, non-comparison sorting algorithms tailored for specific data types have been developed.
For integer arrays, a deterministic algorithm with worst-case complexity of $\mathcal{O}(n \log\log n)$~\citep{han2002deterministic} and a randomized algorithm with expected complexity of $\mathcal{O}(n \sqrt{\log\log n})$~\citep{han2002integer} have been proposed.
For real-valued arrays, recent advances have led to the development of a sorting algorithm with a worst-case complexity of $\mathcal{O}(n \sqrt{\log n})$~\citep{han2020sorting}.
\revise{Our PCF Learned Sort also targets real-valued arrays and, under mild assumptions on the distribution, achieves an expected complexity of $\mathcal{O}(n \log\log n)$.
Moreover, when the algorithm of \citet{han2020sorting} is used as the internal sorting component, PCF Learned Sort has a worst-case complexity of $\mathcal{O}(n \sqrt{\log n})$, which matches that of \citet{han2020sorting}.}

\subsection{Learned Index}
\label{sec: Learned Index}
\citet{kraska2018case_li} showed that index data structures such as B-trees and Bloom filters can be made faster or more memory efficient by combining them with machine learning models and named such novel data structures Learned Index.
Since then, learning augmented B-trees~\citep{wang2020sindex,kipf2020radixspline,ferragina2020pgm,zeighami2023distribution}, Bloom filters~\citep{mitzenmacher2018model,vaidya2020partitioned,sato2023fast,sato2024fast}, and even range-minimum query structures~\citep{ferragina2025fl} have been proposed.
There are several works on learning augmented B-trees whose performance is theoretically guaranteed.
PGM-index~\citep{ferragina2020pgm} is a learning augmented B-tree that is guaranteed to have the same worst-case query complexity as the classical B-tree, i.e., $\mathcal{O}(\log n)$.
\citet{zeighami2023distribution} proposed a learning augmented B-tree with an expected query complexity of $\mathcal{O}(\log\log n)$ under mild assumptions on the distribution.
\revise{More broadly, this line of research connects to earlier studies on compressed data structures with provable guarantees, such as the foundational work of \citet{sadakane2007compressed}.}

\section{Methods}
\label{sec: methods}

This section introduces our \emph{Learned Sort framework} and its PCF-based instantiation. 
In \cref{sec: notation}, we set up the notation, in \cref{sec: Method Overview} we present the general framework, and in \cref{sec: PCF Learned Sort} we instantiate it with a PCF-based CDF model to obtain concrete complexity guarantees.

\revise{
\subsection{Notation and Setup}
\label{sec: notation}
}

\revise{
Following the learned-sort paradigm~\cite{kraska2021sagedb,kristo2020case,kristo2021defeating}, our learned sort algorithm recursively applies \textit{model-based bucketing}.
Model-based bucketing partitions the array into multiple buckets using a CDF model trained on the input so that the bucket with the larger ID gets the larger value.
}

\revise{
Let $\mathcal{D} \subseteq \mathbb{R}$ be the domain of possible key values and $\bm{x} \in \mathcal{D}^{n}$ be the input array.
For vector/array $\bm{v}$, we write $|\bm{v}|$ for its length (number of elements), e.g., $|\bm{x}|=n$.
If $|\bm{x}|$ is smaller than a fixed threshold $\tau \in \mathbb{N}$, the algorithm falls back to a classic sorting algorithm.
We refer to this fallback classical sorting algorithm as the \textit{standard sort}.
As the standard sort, we can use any sorting algorithm, such as IntroSort, QuickSort, or the algorithm of \citet{han2020sorting}.
}

\revise{
For model-based bucketing, we define functions $\alpha(n),\beta(n),\gamma(n),\delta(n):\mathbb{N}\to\mathbb{N}$ as follows:
$\alpha(n)$ is the number of samples used to train the CDF model, $\beta(n)$ is the number of PCF bins (a hyperparameter; see Section~\ref{sec: PCF Learned Sort}), $\gamma(n)+1$ is the number of buckets, and $\delta(n)$ is the threshold that determines the behavior after bucketing (see Section~\ref{sec: Method Overview}).
For brevity, we write $\alpha,\beta,\gamma,\delta$.
}

\subsection{Learned-Sort Framework}
\label{sec: Method Overview}

\begin{figure*}[t]
  \centering
  \includegraphics[width=\textwidth]{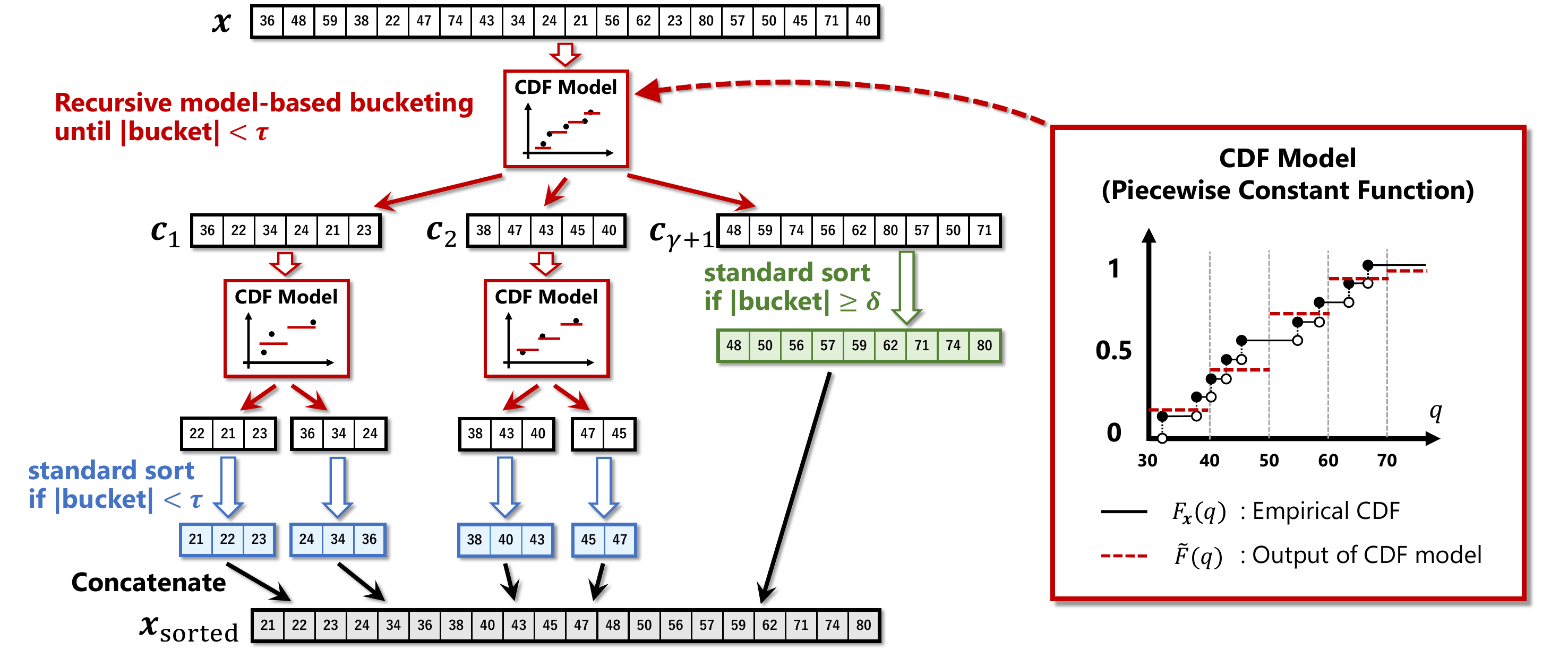}
  \caption{PCF Learned Sort: First, the input array is partitioned into $\gamma+1$ buckets using a CDF model-based method. Buckets larger than $\delta$ or smaller than $\tau$ are sorted with a standard sort \revise{(e.g., IntroSort or QuickSort)}. Otherwise, the recursive model-based bucketing is repeated. Finally, the sorted arrays are concatenated. The CDF model used for bucketing is a Piecewise Constant Function (PCF). The function is constant within each interval, and the interval widths are constant.}
  \label{fig: rmi}
\end{figure*}

\begin{algorithm}[t]
    \caption{The Learned Sort algorithm}
    \label{alg: learned sort}
\begin{algorithmic}[1]
    \State {\bf Input:} $\bm{x}\in\mathbb{R}^n$ \quad {\bf Output:} $\bm{x}_\mathrm{sorted} \in \mathbb{R}^{n}$ (the sorted version of $\bm{x}$)
    \State{\bfseries Subroutines:}
    \State{~~~ \textsc{Standard-Sort}$(\bm{x})$ : \revise{The standard sort with complexity guarantees (e.g, IntroSort or QuickSort)}}
    \State{~~~ \textsc{CDF-Model}$(\bm{x})$ : Instantiate a CDF Model $\tilde{F}(q)$ that estimates $F_{\bm{x}}(q)$}
    \Function{Learned-Sort}{$\bm{x}$}
        \State $n \leftarrow | \bm{x} |$
        \If{$n < \tau$} \Comment{Small bucket}
            \State \Return \textsc{Standard-Sort}$(\bm{x})$ \label{code: too small}
        \EndIf
        % \State // Model-based bucketing
        \State $\tilde{F}(q) \leftarrow \textsc{CDF-Model}(\bm{x})$ \Comment{Model-based bucketing}
        \State $\bm{c}_1 \leftarrow [~], ~ \bm{c}_2 \leftarrow [~], ~ \dots ~ , ~ \bm{c}_{\gamma + 1} \leftarrow [~]$
        \For{$i = 1,2,\dots,n$}
            \State $j \leftarrow \left\lfloor \tilde{F}(x_i) \gamma \right\rfloor + 1$
            \State $\bm{c}_j.\textsc{Append}(x_i)$
        \EndFor
        \For{$j = 1, 2, \dots, \gamma + 1$} \Comment{Recursively sort and concatenate}
            \If{$|\bm{c}_{j}| \geq \delta$}
                \State $\bm{c}_{j} \leftarrow \textsc{Standard-Sort}(\bm{c}_{j})$ \label{code: too large}
            \Else
                \State $\bm{c}_{j} \leftarrow \textsc{Learned-Sort}(\bm{c}_{j})$ \label{code: recursive}
            \EndIf
        \EndFor
        \State $\bm{x}_\mathrm{sorted} \leftarrow \textsc{Concatenate}(\bm{c}_1, \bm{c}_2, \dots, \bm{c}_{\gamma + 1})$
        \State \Return $\bm{x}_\mathrm{sorted}$
    \EndFunction
\end{algorithmic}
\end{algorithm}

\paragraph{Algorithm description.}
At a high level, our algorithm is a recursive model-based bucketing scheme augmented with an exception-handling mechanism for bucketing ``failures.'' 
This exception-handling mechanism greatly simplifies the analysis of both expected and worst-case complexity.
The overall workflow is visualized in \cref{fig: rmi}, and the pseudocode is given in \cref{alg: learned sort}.

If the length of the input array is less than $\tau$, our algorithm sorts the input array using a standard sort \revise{(e.g., IntroSort or QuickSort)}.
Otherwise, model-based bucketing is performed.

The model-based bucketing method $\mathcal{M}$ takes an input array $\bm{x}$ and partitions it into several buckets.
First, all or some elements of $\bm{x}$ are used to train the CDF model $\tilde{F} \colon \mathcal{D} \to [0, 1]$.
The $\tilde{F}(q)$ is trained to approximate the empirical CDF $F_{\bm{x}}(q) = |{i \in \{1,\dots,n\} \mid x_i \leq q}| / |\bm{x}|$.
Any non-decreasing model can serve as the CDF model $\tilde{F}(q)$ (e.g., linear models, monotonic MLPs, or the PCF introduced in \cref{sec: PCF Learned Sort}).  

After training, the CDF model is used to partition the input array $\bm{x}$ into $\gamma + 1$ buckets.
All $\gamma + 1$ buckets, $\{\bm{c}_j\}_{j=1}^{\gamma + 1}$, are initialized to be empty, and then for each $i\in\{1,\dots,n\}$, $x_i$ is appended to $\bm{c}_{\left\lfloor \tilde{F}(x_i) \gamma \right\rfloor + 1}$.
This is based on the intuition that the number of elements less than or equal to $x_i$ in the array $\bm{x}$ (i.e., $|\{j \in \{1,\dots,n\} \mid x_j \leq x_i\}|$) is approximately equal to $n\tilde{F}(x_i)$.

We restrict the CDF model $\tilde{F}$ to non-decreasing functions to ensure that the bucket with the larger ID gets the larger value, i.e., $p \in \bm{c}_j \;\land\; q \in \bm{c}_k \;\land\; j < k \;\;\Rightarrow\;\; p < q$.
This means that each bucket is responsible for a disjoint and continuous interval.
Let $t_j = \min_{x \in \bm{c}_j} x ~ (j=1,\dots,\gamma+1), ~ t_{\gamma + 2} = \infty$, then the $j$-th bucket $\bm{c}_j$ ($j=1,\dots,\gamma+1$) is responsible for a continuous interval $\mathcal{I}_{j} \coloneqq [t_ j, t_{j+1})$.

After model-based bucketing, our algorithm determines for each bucket whether the bucketing ``succeeds'' or ``fails.''
For each $j\in\{1, \dots, \gamma +1\}$, we check whether the size of bucket $\bm{c}_j$ is less than $\delta$.
If $| \bm{c}_j | \geq \delta$ (which means the bucketing ``fails''), the bucket is sorted using the standard sort \revise{(e.g., IntroSort or QuickSort)}.
If $| \bm{c}_j | < \delta$ (which means the bucketing ``succeeds''), the bucket is sorted by recursively calling our Learned Sort algorithm.
Note that the parameters such as $\gamma$ and $\delta$ are redetermined for each recursion according to the size of the bucket (i.e., the input array in the next recursion step), and the CDF model is retrained for each bucket.
Finally, the sorted buckets are concatenated to form the sorted array.

\paragraph{Worst-case complexity.}
The following lemma guarantees the worst-case complexity of our Learned Sort.
\begin{lemma}
\label{thm: nlogn}
Assume that there exists a model-based bucketing algorithm $\mathcal{M}$ such that $\mathcal{M}$ can perform bucketing (including model training and inferences) an array of length $n$ into $\gamma + 1 = \mathcal{O}(n)$ buckets with a worst-case complexity of $\mathcal{O}(n)$.
\revise{Also, assume that the standard sort has a worst-case complexity of $\mathcal{O}(nU(n))$, where $U(n)$ is a non-decreasing function.}
Then, the worst-case complexity of our Learned Sort with such $\mathcal{M}$ and $\delta = \lfloor n ^ d \rfloor$ (where $d$ is a constant satisfying $0<d<1$) is \revise{$\mathcal{O}(nU(n) + n \log \log n)$}.
\end{lemma}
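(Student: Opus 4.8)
The plan is to set up a recurrence for the worst-case running time $T(n)$ of \textsc{Learned-Sort} on an array of length $n$ and solve it by strong induction on $n$. For $n<\tau$ the call just runs \textsc{Standard-Sort} on fewer than $\tau$ elements, so $T(n)=\mathcal{O}(1)$. For $n\ge\tau$, the assumption gives that the model-based bucketing (training, the $n$ inferences, and scattering the elements into the $\gamma+1=\mathcal{O}(n)$ buckets) costs $\mathcal{O}(n)$; the loop over the $\gamma+1=\mathcal{O}(n)$ buckets and the final \textsc{Concatenate} of arrays with $n$ elements in total also cost $\mathcal{O}(n)$. Writing $\bm{c}_1,\dots,\bm{c}_{\gamma+1}$ for the buckets (so $\sum_j|\bm{c}_j|=n$) and $\delta=\lfloor n^{d}\rfloor$, this yields
\begin{equation*}
T(n)\;\le\;\alpha n\;+\;\sum_{j:\,|\bm{c}_j|\ge\delta}\mathcal{O}\!\left(|\bm{c}_j|\log|\bm{c}_j|\right)\;+\;\sum_{j:\,|\bm{c}_j|<\delta}T(|\bm{c}_j|)
\end{equation*}
for an absolute constant $\alpha$. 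Since $0<d<1$ and $n\ge\tau\ge 2$ we have $\delta<n$, so every recursive call is on a strictly shorter array and the recursion terminates.

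Next I would bound the two sums separately. The ``large'' buckets contribute $\sum_{j:\,|\bm{c}_j|\ge\delta}\mathcal{O}(|\bm{c}_j|\log|\bm{c}_j|)\le\mathcal{O}(\log n)\sum_j|\bm{c}_j|=\mathcal{O}(n\log n)$ --- in effect, each element is charged $\mathcal{O}(\log n)$ once and then ``finalized'' by a single \textsc{Standard-Sort}. For the ``small'' buckets, the key point is that recursion is confined to arrays of length $|\bm{c}_j|<\delta\le n^{d}$, so $\log|\bm{c}_j|<d\log n$; applying the inductive hypothesis $T(m)\le A\,m\log m$ to the buckets with $|\bm{c}_j|\ge\tau$, and the $\mathcal{O}(1)$ base case to the at most $\gamma+1=\mathcal{O}(n)$ buckets with $|\bm{c}_j|<\tau$, gives
\begin{equation*}
\sum_{j:\,|\bm{c}_j|<\delta}T(|\bm{c}_j|)\;\le\;A\,(d\log n)\cdot n\;+\;\mathcal{O}(n)\;=\;A\,d\,n\log n\;+\;\mathcal{O}(n).
\end{equation*}
Combining, $T(n)\le A\,d\,n\log n+\mathcal{O}(n\log n)+\mathcal{O}(n)$; because $d<1$ there is a slack factor $1-d$, so choosing $A$ large enough --- also large enough to cover the finitely many cases $2\le n<\tau$ --- makes the right-hand side at most $A\,n\log n$, which closes the induction and gives $T(n)=\mathcal{O}(n\log n)$.

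I expect the main difficulty to be bookkeeping rather than anything deep: because $\mathcal{M}$ is only promised to produce $\gamma+1=\mathcal{O}(n)$ buckets (many of which may be empty or tiny), I must check that iterating over all of them, concatenating them, and making trivial recursive calls on the small ones all stays within $\mathcal{O}(n)$ at each level --- which is precisely why the hypothesis ``$\gamma+1=\mathcal{O}(n)$'' is needed. The one genuinely load-bearing observation is that the threshold $\delta=\lfloor n^{d}\rfloor$ restricts recursion to buckets of size $<n^{d}$: this contracts the recursive $\log$-term by the constant factor $d<1$ (making the induction close) and, equivalently, caps the recursion depth at $\mathcal{O}(\log\log n)$ through the iteration $n\mapsto n^{d}\mapsto n^{d^2}\mapsto\cdots$, so the cumulative $\mathcal{O}(n)$-per-level overhead is only $\mathcal{O}(n\log\log n)=\mathcal{O}(n\log n)$. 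The same depth bound is what the subsequent $\mathcal{O}(n\log\log n)$ expected-complexity result will rely on.
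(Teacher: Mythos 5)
Your proposal is correct and takes essentially the same approach as the paper: a strong induction with hypothesis $T(n)=\mathcal{O}(n\log n)$, separating the large buckets (charged once to a ``standard'' sort, $\mathcal{O}(n\log n)$ total) from the small buckets (recursive, where $|\bm{c}_j|<\delta\le n^d$ gives $\log|\bm{c}_j|<d\log n$), and closing the induction via the slack factor $1-d>0$. The only cosmetic difference is that the paper folds the finitely many sub-$\tau$ cases into an additive constant in the inductive hypothesis ($P(n)\le C+l\,n\log n$) rather than handling them as a separate $\mathcal{O}(n)$ term, but the argument is the same.
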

This lemma can be intuitively shown from the following two points: (1) the maximum recursion depth is $\mathcal{O}(\log\log n)$, and (2) each element of the input array $\bm{x}$ undergoes several bucketing and only one standard sort.
The first point can be shown from the fact that the size of the bucket in the $i$-th recursion depth is less than $n^{d^i}$.
The second point is evident from the algorithm's design since the buckets sorted by standard sort are now left only to be concatenated.
The exact proof is given in \cref{sec: proof of nlogn}.

\revise{
Note that this guarantee critically relies on the exception-handling mechanism:
without it, the recursion depth could reach $\Theta(n)$ and the worst-case time would degrade to $\Theta(n^2)$, 
whereas with it the recursion depth is always bounded by $O(\log\log n)$.
}

\paragraph{Expected complexity.}
Next, we introduce a lemma about the expected computational complexity of our Learned Sort.
The following assumption is necessary to guarantee the expected computational complexity.
\begin{assumption}
\label{ass: iid}
The input array $\bm{x} \in \mathcal{D}^n$ is formed by independent sampling according to a probability density distribution $f(x) \colon \mathcal{D} \to \mathbb{R}_{\geq 0}$.
\end{assumption}

We define $f_{\mathcal{I}}(x) \colon \mathcal{I} \to \mathbb{R}_{\geq 0}$ to be the conditional probability density distribution of $f(x)$ under the condition that $x \in \mathcal{I}$ for a interval $\mathcal{I} \subseteq \mathcal{D}$, i.e., $f_{\mathcal{I}}(x) \coloneqq \frac{f(x)}{\int_{\mathcal{I}}f(y)dy}$.

The expected computational complexity of our proposed Learned Sort is guaranteed by the following lemma.
\begin{lemma}
\label{thm: nloglogn}
Let $\bm{x}_{\mathcal{I}} ~ (\in \mathcal{I}^{n})$ be the array formed by sampling $n$ times independently according to $f_{\mathcal{I}}(x)$.
Assume that there exist a model-based bucketing algorithm $\mathcal{M}$ and a constant $d ~ (\in (0,1))$ that satisfy the following three conditions for any interval $\mathcal{I} ~ (\subseteq \mathcal{D})$:
(i) $\mathcal{M}$ can perform bucketing (including model training and inferences) on an array of length $n$, with an expected complexity of $\mathcal{O}(n)$,
(ii) $\gamma + 1= \mathcal{O}(n)$, and
(iii) $\Pr[\exists j, | \bm{c}_{j} | \geq \lfloor n^d \rfloor] = \mathcal{O}\left(\frac{1}{\log n}\right)$.
Also, assume that the standard sort has an expected complexity of $\mathcal{O}(n \log n)$.
Then, the expected complexity of our Learned Sort with such $\mathcal{M}$ and $\delta = \lfloor n ^ d \rfloor$ is $\mathcal{O}(n\log\log n)$.
\end{lemma}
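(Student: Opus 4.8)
The plan is to charge the running time to the recursion tree of \textsc{Learned-Sort} and bound it level by level, combining (i) a \emph{deterministic} $\mathcal{O}(\log\log n)$ bound on the depth of the tree with (ii) an $\mathcal{O}(n)$ bound on the \emph{expected} total work of all calls at any one level. Writing $W_k$ for the total work of the level-$k$ calls and $D$ for the (random) depth of the tree, point (i) will give $D\le D_{\mathrm{max}}$ surely for a deterministic $D_{\mathrm{max}}=\mathcal{O}(\log\log n)$; since $W_k\ge 0$ and $W_k=0$ for $k>D$, this yields $\mathbb{E}[\sum_{k\ge 0}W_k]=\sum_{k=0}^{D_{\mathrm{max}}}\mathbb{E}[W_k\,\mathbbm{1}\{k\le D\}]\le\sum_{k=0}^{D_{\mathrm{max}}}\mathbb{E}[W_k]=(D_{\mathrm{max}}+1)\cdot\mathcal{O}(n)=\mathcal{O}(n\log\log n)$, which is the claimed bound on $\mathbb{E}[T(\bm{x})]$ for $\bm{x}$ distributed as in \cref{ass: iid}.

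For (i): a call that actually recurses (rather than returning in the base case $|\bm{x}|<\tau$, or being handed to \textsc{Standard-Sort}) produces sub-buckets whose integer size is strictly below $\delta=\lfloor n^{d}\rfloor$, hence at most $n^{d}-1<n^{d}$; thus along any root-to-node path the array sizes are dominated by $n,\,n^{d},\,n^{d^{2}},\dots$, and $n^{d^{k}}<\tau$ as soon as $d^{k}<\log\tau/\log n$, i.e. as soon as $k>(\log\log n-\log\log\tau)/\log(1/d)=\mathcal{O}(\log\log n)$ (taking the fixed constant $\tau\ge 2$). This is the depth bound already sketched after \cref{thm: nlogn}, made precise; it uses only $0<d<1$ and the control structure of the algorithm, not the distribution.

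For (ii): fix a level $k$ and split $W_k$ into three parts. \emph{(a) Bucketing and overhead at the internal (recursing) nodes.} A node handling an array of size $m$ spends $\mathcal{O}(m)$ in expectation on training and applying the CDF model (first hypothesis) plus $\mathcal{O}(\gamma+1)=\mathcal{O}(m)$ on the loop over buckets and the concatenation (second hypothesis); the level-$k$ internal nodes handle disjoint sub-arrays of $\bm{x}$, so their sizes sum to at most $n$ and this part is $\mathcal{O}(n)$ in expectation. \emph{(b) \textsc{Standard-Sort} on empty or too-small buckets.} Each such call costs $\mathcal{O}(\tau\log\tau)=\mathcal{O}(1)$, and their number is at most the number of children of the level-$(k-1)$ internal nodes, which by the second hypothesis is $\sum_{v}\mathcal{O}(m_{v})=\mathcal{O}(n)$; so this part is $\mathcal{O}(n)$. \emph{(c) \textsc{Standard-Sort} on ``failed'' buckets of size $\ge\delta$.} For a level-$(k-1)$ internal node $v$ of size $m_{v}$, its failed children have sizes summing to at most $m_{v}$, so sorting all of them costs at most $\mathcal{O}(m_{v}\log m_{v})$; this cost is incurred only on the event ``$v$ has a failed child'', which by the third hypothesis has probability $\mathcal{O}(1/\log m_{v})$, so the expected contribution of $v$ is $\mathcal{O}(m_{v}\log m_{v})\cdot\mathcal{O}(1/\log m_{v})=\mathcal{O}(m_{v})$ (for $m_{v}$ below a fixed constant one bounds the cost directly by that constant, still $\mathcal{O}(m_{v})$). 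Summing over the level-$(k-1)$ internal nodes gives $\mathcal{O}(n)$. Adding (a)--(c), $\mathbb{E}[W_k]=\mathcal{O}(n)$ with a constant independent of $k$, which is (ii).

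The step that needs real care --- and the only place the distributional assumption genuinely enters --- is the justification that the first and third hypotheses may be applied not merely at the root but at every recursive call. The natural route is to condition on the realized bucket boundaries $t_{1},\dots,t_{\gamma+2}$ (equivalently, on the sufficient statistics of the CDF model): one argues that, given these, the contents of bucket $j$ are a sequence of i.i.d. draws from $f_{\mathcal{I}_{j}}$ of the realized length $|\bm{c}_{j}|$, so that \cref{ass: iid} is reproduced inside bucket $j$ with $\mathcal{D}$ replaced by $\mathcal{I}_{j}$ and the hypotheses --- which are quantified over all intervals $\mathcal{I}\subset\mathcal{D}$ --- apply verbatim; the conditional expectations and probabilities used in (a)--(c) are then taken first given $(|\bm{c}_{j}|,\mathcal{I}_{j})$ and averaged afterwards, which is how those bounds should be read. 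Establishing this conditional i.i.d. structure cleanly --- and, should the bucketing rule make it only approximately true, showing that the slack does not spoil the $\mathcal{O}(1/\log m)$ failure rate --- is the main obstacle; the depth bound and the three-part level accounting are otherwise elementary.
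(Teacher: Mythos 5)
Your proposal is correct and takes a genuinely different route from the paper's. The paper proves the bound by strong induction on $n$: it posits the ansatz $T(n) \le C + l\,n\log\log n$, unrolls one level of recursion as
$T(k) \le R(k) + \mathbb{E}\bigl[\sum_j \mathbbm{1}[|\bm{c}_j| \ge \delta]S(|\bm{c}_j|) + \mathbbm{1}[|\bm{c}_j| < \delta]T(|\bm{c}_j|)\bigr]$,
bounds the failure term by $\Pr[\text{fail}]\cdot\mathcal{O}(k\log k)=\mathcal{O}(k)$ and the recursive term by $lk\log\log\lfloor k^d\rfloor + \mathcal{O}(k)$ via the inductive hypothesis, and closes the induction by choosing $l$ so that $l\log(1/d)$ absorbs the $\mathcal{O}(k)$ overhead. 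You instead charge work to the levels of the recursion tree and multiply a deterministic depth bound $\mathcal{O}(\log\log n)$ (sizes shrink as $n^{d^k}$) by an $\mathcal{O}(n)$ bound on expected per-level work, with the failed-bucket accounting $\mathcal{O}(1/\log m_v)\cdot\mathcal{O}(m_v\log m_v)=\mathcal{O}(m_v)$ playing the same role as the paper's failure term. The two underlying ingredients are the same; the paper's induction packages them into a closed-form invariant (and so needs to guess the $n\log\log n$ form in advance), whereas your level-by-level sum makes the $\log\log n$ factor appear directly as the number of levels. One further point in your favor: you explicitly flag that applying the hypotheses to non-root calls requires the bucket contents to remain (conditionally) i.i.d.\ from $f_{\mathcal{I}_j}$ given the realized boundaries — this is exactly what the paper's proof uses silently when it writes $T(|\bm{c}_j|)$ inside the expectation and invokes the inductive hypothesis, and your treatment is more honest that this step is where \cref{ass: iid} and the ``for any interval $\mathcal{I}$'' quantifier are actually doing work.
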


This lemma can be proved intuitively by the following two points:
(1) the maximum recursion depth is $\mathcal{O}(\log\log n)$, and
(2) the expected total computational complexity from the $i$-th to the $(i+1)$-th recursion depth is $\mathcal{O}(n)$.
The first point is the same as in the explanation of the proof of \cref{thm: nlogn}.
The second point holds because the expected computational complexity from the $i$-th to the $(i+1)$-th recursion depth is $\mathcal{O}(n \log n)$ with probability $\mathcal{O}(\frac{1}{\log n})$, and $\mathcal{O}(n)$ in other cases.
See \cref{sec: proof of nloglogn} for the exact proof.

Note that the assumption of \cref{thm: nloglogn} includes ``$\mathcal{M}$ works well with high probability for any $\mathcal{I} ~ (\subseteq \mathcal{D})$.''
This is because our Learned Sort algorithm recursively repeats the model-based bucketing.
The range of elements in the bucket, i.e., the input array in the next recursion step, can be any interval $\mathcal{I} ~ (\subseteq \mathcal{D})$.

\subsection{PCF Learned Sort}
\label{sec: PCF Learned Sort}

We now instantiate the framework with \emph{PCF Learned Sort}, which satisfies the assumptions of \cref{thm: nlogn} and \cref{thm: nloglogn}, thereby providing both worst-case and expected-time guarantees.
PCF Learned Sort approximates the CDF using a \emph{Piecewise Constant Function (PCF)} with $\beta$ equal-width bins; the model output is constant within each bin (see the right panel of Figure~\ref{fig: rmi}).
The study that develops a Learned Index with a theoretical guarantee on its complexity~\citep{zeighami2023distribution} also used PCF as a CDF model.
\revise{While our framework admits more expressive CDF models (such as spline-based or neural models), we focus on PCF due to its minimal training and inference cost and the tractability of its theoretical analysis.}

The model-based bucketing method in PCF Learned Sort $\mathcal{M}_\mathrm{PCF}$ trains the CDF model $\tilde{F}$ as follows. 
The parameters $\alpha \in \{1,\dots,n\}$ and $\beta \in \mathbb{N}$ are determined by $n$, where $\alpha$ is the number of samples used to train the model and $\beta$ is the number of intervals in the PCF. 
The PCF is trained by counting the number of samples in each interval. 
We define $i(x) = \lfloor (x - x_{\min})\beta / (x_{\max} - x_{\min}) \rfloor + 1$, where $x_{\min}=\min_i x_i$ and $x_{\max}=\max_i x_i$. 
From $\bm{x}$, $\alpha$ samples are taken to form $\bm{a} \in \mathcal{D}^\alpha$, and $i(x)$ is used to construct $\bm{b} \in \mathbb{Z}_{\geq 0}^{\beta + 1}$ with $b_i = \left| \left\{ j \in \{1,\dots,\alpha\} \mid i(a_j) \leq i \right\} \right|$.
This counting procedure trains the PCF. 
Note that $\bm{b}$ is an non-decreasing non-negative array and $b_{\beta + 1} = \alpha$, i.e., $0 \leq b_1 \leq b_2 \leq \dots \leq b_{\beta + 1} = \alpha$.

Inference for $\tilde{F}(x)$ is then given by $\tilde{F}(x) = \frac{b_{i(x)}}{\alpha}$.
Since $i(x)$ and $\bm{b}$ are non-decreasing, $\tilde{F}(x)$ is also non-decreasing.
Also, $0 \leq \tilde{F}(x) \leq 1$ because $0 \leq b_i \leq \alpha$ for every $i$.

The following is a lemma to bound the probability that $\mathcal{M}_\mathrm{PCF}$ will ``fail'' bucketing. 
This lemma is important to guarantee the expected computational complexity of PCF Learned Sort.
\begin{lemma}
\label{thm: pcf divide success prob}
Let $\sigma_{1}$ and $\sigma_{2}$ be respectively the lower and upper bounds of the probability density distribution $f(x)$ in $\mathcal{D}$, and assume that $0 < \sigma_{1} \leq \sigma_{2} < \infty$. That is, $x\in\mathcal{D} \;\;\Rightarrow\;\; \sigma_{1} \leq f(x) \leq \sigma_{2}$.
Then, in model-based bucketing of $\bm{x}_{\mathcal{I}} ~ (\in \mathcal{I}^n)$ to $\{\bm{c}_{j}\}_{j=1}^{\gamma + 1}$ using $\mathcal{M}_\mathrm{PCF}$, the following holds for any interval $\mathcal{I} ~ (\subseteq \mathcal{D})$:
\begin{equation}
\label{equ: pcf divide success prob}
    K \coloneqq \frac{\gamma\delta}{2n} - \frac{2 \sigma_2 \gamma}{\sigma_1 \beta} \geq 1
    \;\;\Rightarrow\;\; \Pr[\exists j, | \bm{c}_{j} | > \delta] \leq 
    \frac{2n}{\delta} \exp\left\{ -\frac{\alpha K}{2 \gamma} \left( 1 - \frac{1}{K} \right)^2 \right\}.
\end{equation}
\end{lemma}

The proof of this lemma is based on and combines proofs from two existing studies.
The first is Lemma 5.2. from a study of IPS\textsuperscript{4}o~\citep{axtmann2022engineering}, an efficient sample sort.
This lemma guarantees the probability of a ``successful recursion step'' when selecting pivots from samples and using them to perform a partition.
This lemma is for the method that does not use the CDF model, so the proof cannot be applied directly to our case.
Another proof we refer to is the proof of Lemma 4.5. from a study that addressed the computational complexity guarantee of the Learned Index~\citep{zeighami2023distribution}.
This lemma provides a probabilistic guarantee for the error between the output of the PCF and the empirical CDF.
Some modifications are required to adapt it to the context of sorting and to attribute it to the probability of bucketing failure, i.e., $\Pr[\exists j, | \bm{c}_{j} | > \delta]$.
By appropriately combining the proofs of these two lemmas, \cref{thm: pcf divide success prob} is proved.
The exact proof is given in \cref{sec: Proofs for PCF Learned Sort}.

\revise{
Here, we emphasize that the assumption of this lemma, $0 < \sigma_{1} \leq \sigma_{2} < \infty$, is sufficiently reasonable and ``mild'' as described in~\citep{zeighami2023distribution}.
It asserts that the probability density function $f(x)$ is both bounded and nonzero over its domain $\mathcal{D}$.
This class of distributions covers the majority of real-world scenarios because real-world data is commonly derived from bounded and continuous phenomena, e.g., age, grades, and data over a period of time.
Empirically, \citet{zeighami2023distribution} further suggest that the ratio $\sigma_{2}/\sigma_{1}$ tends to remain close to 1 in a wide variety of practical datasets, and even for more challenging cases like OSM, the ratio still appears to remain at most around 20.
}

Using \cref{thm: nlogn}, \cref{thm: nloglogn}, and \cref{thm: pcf divide success prob}, we can prove the following theorems.
\begin{theorem}
\label{thm: pcf nlogn}
If $\mathcal{M}_\mathrm{PCF}$ is the bucketing method, the worst-case complexity of standard sort is \revise{$\mathcal{O}(nU(n))$ (where $U(n)$ is a non-decreasing function)}, and $\alpha=\beta=\gamma=\delta=\lfloor n^{3/4} \rfloor$, then the worst-case complexity of PCF Learned Sort is \revise{$\mathcal{O}(nU(n) + n \log \log n)$}.
\end{theorem}
\begin{proof}
When $\alpha = \beta = \gamma = \lfloor n^{3/4} \rfloor$, the computational complexity for model-based bucketing is $\mathcal{O}(n)$ because (i) the PCF is trained in $\mathcal{O}(\alpha + \beta) = \mathcal{O}(n^{3/4})$, and (ii) the total complexity of inference for $n$ elements is $\mathcal{O}(n)$, since the inference is performed in $\mathcal{O}(1)$ per element.
Therefore, since $\gamma + 1 = \mathcal{O}(n)$, the worst-case complexity of standard sort is \revise{$\mathcal{O}(nU(n))$}, and $\delta = \lfloor n^{3/4} \rfloor$, we can prove the worst-case complexity of PCF Learned Sort is \revise{$\mathcal{O}(nU(n) + n \log \log n)$} by \cref{thm: nlogn}.
\end{proof}

\begin{theorem}
\label{thm: pcf nloglogn}
Let $\sigma_{1}$ and $\sigma_{2}$ be the lower and upper bounds, respectively, of the probability density distribution $f(x)$ in $\mathcal{D}$, and assume that $0 < \sigma_{1} \leq \sigma_{2} < \infty$.
Then, if $\mathcal{M}_\mathrm{PCF}$ is the bucketing method, the expected complexity of standard sort is $\mathcal{O}(n \log n)$, and $\alpha=\beta=\gamma=\delta=\lfloor n^{3/4} \rfloor$, then the expected complexity of PCF Learned Sort is $\mathcal{O}(n\log\log n)$.
\end{theorem}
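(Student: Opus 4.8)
The plan is to derive \cref{thm: pcf nloglogn} from \cref{thm: nloglogn}: under \cref{ass: iid} together with the bounded-density hypothesis $0<\sigma_1\le\sigma_2<\infty$, I would check that the bucketing method $\mathcal{M}_\mathrm{PCF}$ with $\alpha=\beta=\gamma=\delta=\lfloor n^{3/4}\rfloor$ (so $d=3/4$) satisfies all three conditions demanded by that lemma, \emph{uniformly over every interval $\mathcal{I}\subset\mathcal{D}$}; the conclusion then follows by invoking \cref{thm: nloglogn}.

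First I would dispose of the two easy conditions. The bound $\gamma+1=\lfloor n^{3/4}\rfloor+1=\mathcal{O}(n)$ is immediate. For the expected (in fact worst-case) $\mathcal{O}(n)$ cost of $\mathcal{M}_\mathrm{PCF}$ on an array of length $n$, I would reuse verbatim the argument already given in the proof of \cref{thm: pcf nlogn}: drawing the $\alpha$ samples and forming $\bm{b}$ by counting costs $\mathcal{O}(\alpha+\beta)=\mathcal{O}(n^{3/4})$, computing $x_\mathrm{min},x_\mathrm{max}$ costs $\mathcal{O}(n)$, and each of the $n$ inferences $\tilde F(x_i)=b_{i(x_i)}/\alpha$ followed by the append costs $\mathcal{O}(1)$, for $\mathcal{O}(n)$ overall; none of this depends on which interval the entries lie in.

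The substantive step is the third condition, $\Pr[\exists j,\ |\bm{c}_j|\ge\lfloor n^d\rfloor]=\mathcal{O}(1/\log n)$, which I would obtain from \cref{thm: pcf divide success prob}. Since bucket sizes are integers, $\Pr[\exists j,\ |\bm{c}_j|\ge\delta]=\Pr[\exists j,\ |\bm{c}_j|>\delta-1]$, so \cref{thm: pcf divide success prob} applies with $\delta$ replaced by $\delta-1$ (for $n$ large enough that $\delta-1\ge 1$; finitely many small $n$ are absorbed into the $\mathcal{O}$-constant). Substituting $\alpha=\beta=\gamma=\lfloor n^{3/4}\rfloor$ into \cref{equ: K define}: the second term $\frac{2\sigma_2\gamma}{\sigma_1\beta}=\frac{2\sigma_2}{\sigma_1}$ is a constant, whereas the first term $\frac{\gamma(\delta-1)}{2n}$ is of order $\frac{n^{3/4}\cdot n^{3/4}}{2n}=\Theta(\sqrt n)$, hence $K=\Theta(\sqrt n)$; in particular $K\ge 1$ for all large $n$ and $(1-1/K)^2=1-o(1)\ge\frac12$ eventually. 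Because $\alpha=\gamma$, the exponent in \cref{equ: pcf divide success prob} is $\frac{\alpha K}{2\gamma}(1-\tfrac1K)^2=\frac{K}{2}(1-\tfrac1K)^2=\Theta(\sqrt n)$, while the prefactor $\frac{2n}{\delta-1}=\Theta(n^{1/4})$, so the right-hand side of \cref{equ: pcf divide success prob} is $\Theta(n^{1/4})\exp\{-\Theta(\sqrt n)\}$, which decays faster than every polynomial and is therefore $\mathcal{O}(1/\log n)$. Crucially this bound holds for all $\mathcal{I}\subset\mathcal{D}$ at once, since \cref{thm: pcf divide success prob} is itself stated uniformly in $\mathcal{I}$.

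With all three hypotheses of \cref{thm: nloglogn} verified for $\mathcal{M}_\mathrm{PCF}$ and $\delta=\lfloor n^{3/4}\rfloor$, that lemma yields the claimed $\mathcal{O}(n\log\log n)$ expected complexity. The only real obstacle is the arithmetic around $K$: one must check that the choice $\gamma=\delta$ makes $\frac{\gamma\delta}{2n}$ grow (here like $\sqrt n$) so that $K\to\infty$ and $K\ge 1$, while simultaneously $\alpha/\gamma$ stays bounded below and $n/\delta$ grows only polynomially, so that the exponential factor in \cref{equ: pcf divide success prob} dominates its polynomial prefactor. The $\ge$-versus-$>$ and floor-function bookkeeping only perturb lower-order terms and is routine.
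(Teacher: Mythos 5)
Your proposal is correct and follows essentially the same route as the paper: verify the three hypotheses of \cref{thm: nloglogn} for $\mathcal{M}_\mathrm{PCF}$, using \cref{thm: pcf divide success prob} with the computation $K=\Theta(\sqrt n)$ to control the bucketing-failure probability, and then invoke \cref{thm: nloglogn}. You are in fact slightly more careful than the paper in reconciling the strict inequality $\Pr[\exists j,\ |\bm{c}_j|>\delta]$ of \cref{thm: pcf divide success prob} with the non-strict $\Pr[\exists j,\ |\bm{c}_j|\ge\lfloor n^d\rfloor]$ demanded by \cref{thm: nloglogn}, which the paper's proof passes over silently; this bookkeeping does not change the asymptotics.
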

\begin{proof}
When $\alpha = \beta = \gamma = \lfloor n^{3/4} \rfloor$, the computational complexity for model-based bucketing is $\mathcal{O}(n)$.
Since $K=\Omega(\sqrt{n})$ when $\alpha = \beta = \gamma = \delta = \lfloor n^{3/4} \rfloor$, $K\geq1$ for sufficiently large $n$, and
\begin{equation}
\label{equ: pfail upper}
    \frac{2n}{\delta} \exp\left\{ -\frac{\alpha K}{2 \gamma} \left( 1 - \frac{1}{K} \right)^2 \right\} = \mathcal{O}(n^{\frac{1}{4}} \exp(-\sqrt{n})) \leq \mathcal{O}\left(\frac{1}{\log n}\right).
\end{equation}
Given that $\gamma + 1 = \mathcal{O}(n)$, the expected complexity of standard sort is $\mathcal{O}(n \log n)$, and $\delta = \lfloor n^{3/4} \rfloor$, it follows from \cref{thm: nloglogn} and \cref{thm: pcf divide success prob} that the expected complexity of PCF Learned Sort is $\mathcal{O}(n \log \log n)$.
\end{proof}

Note that the exact value of $\sigma_1$ and $\sigma_2$ is not required to run PCF Learned Sort since the parameters for this algorithm, i.e., $\alpha$, $\beta$, $\gamma$, and $\delta$, are determined without any prior knowledge.
In other words, PCF Learned Sort can sort in expected $\mathcal{O}(n\log\log n)$ complexity as long as $0<\sigma_1 \leq f(x)\leq \sigma_2 < \infty$, even if it does not know the exact value of $\sigma_1$ and $\sigma_2$.
\revise{If $\sigma_1$ and $\sigma_2$ do not satisfy the assumption of \cref{thm: pcf nloglogn}, the expected complexity of PCF Learned Sort increases to $\mathcal{O}(nU(n) + n \log \log n)$.
Such scenarios arise, for instance, in heavy-tailed distributions, datasets with extremely sparse regions ($\sigma_1 = 0$), or cases where the density is highly concentrated at particular values ($\sigma_2 = \infty$).
However, thanks to the worst-case bound in \cref{thm: pcf nlogn}, the complexity never exceeds $\mathcal{O}(nU(n) + n \log \log n)$.
Thus, the algorithm remains robust even under distributions that do not fully satisfy the assumption.}
\revise{
\paragraph{Alternative CDF Models.}
Our framework is not limited to PCF; it can also accommodate more expressive CDF models.
For instance, we can replace PCF with a spline-based model that approximates the CDF by interpolating the empirical distribution at bin boundaries.
Concretely, given $\beta$ intervals, we evaluate the empirical CDF at the endpoints of each interval and then construct a piecewise linear spline that connects these values, yielding a continuous and non-decreasing CDF approximation.
We refer to the algorithm that applies this spline-based CDF model within our Learned Sort framework as \emph{Spline Learned Sort}.
We show that spline-based models constructed in this way still satisfy the assumptions of \cref{thm: nloglogn}, and thus the expected time complexity guarantee of $\mathcal{O}(n \log \log n)$ remains valid.
A detailed theoretical proof for the spline-based case is provided in \cref{app: proofs spline learned sort}.
}

\section{Experiments}
\label{sec: experiments}

In this section, we empirically validate our theoretical results.
First, in \cref{sec: Computational Complexity of PCF Learned Sort}, we confirm that the complexity of PCF Learned Sort is $\mathcal{O}(n \log \log n)$ for both synthetic and real datasets.
Then, in \cref{sec: Confirmation of pcf divide success prob}, we conduct experiments with various parameter settings and empirically confirm \cref{thm: pcf divide success prob}, a lemma that bounds the probability of bucketing failure and plays a crucial role in guaranteeing the expected complexity of PCF Learned Sort.
Finally, in \cref{sec: experiments on sorting time}, we present from sorting time measurements.

\paragraph{Setup.}
We experimented with synthetic datasets created from the following four distributions: uniform ($\mathsf{min}=0, \mathsf{max}=1$), normal ($\mu=0, \sigma=1$), exponential ($\lambda = 1$), lognormal ($\mu=0, \sigma=1$).
The input array was generated by independently taking $n$ samples from each distribution.
\revise{Only the uniform distribution satisfies the theoretical assumptions required for our complexity guarantees.
The other distributions violate these assumptions, but we include them to evaluate the empirical robustness of PCF Learned Sort.}

We also used the following \revise{16} real datasets, including \textbf{Chicago [Start, Tot]}~\citep{chic}, \textbf{NYC [Pickup, Dist, Tot]}~\citep{nyc}, \textbf{SOF [Humidity, Pressure, Temperature]}~\citep{sof}, \textbf{Wiki}, \textbf{OSM}, \textbf{Books}, \textbf{Face}~\citep{sosd-vldb}, and \textbf{Stocks [Volume, Open, Date, Low]}~\citep{stks}.
\revise{Further dataset details are provided in Appendix~\ref{app: real dataset details}.}
For each dataset, we randomly sample $n$ elements, shuffle them, and use them as an input array to examine the complexity of the sort algorithms.
\revise{Since these are real-world datasets, we cannot definitively determine whether they satisfy our theoretical assumptions.
However, as shown in the histograms in \cref{fig: n_operation}, Chicago [Start], NYC [Pickup, Tot], SOF [Humidity], and Face appear to distribute values across a relatively dense and continuous domain, aligning well with our assumptions.
In contrast, other datasets exhibit long tails or sparse regions, suggesting that our assumptions may not hold.
Furthermore, note that for several datasets (Chicago [Start, Tot], NYC [Dist, Tot], SOF [Humidity, Pressure, Temperature], and Stocks [Volume, Open, Date, Low]), the number of unique values is extremely small (less than 3.2\% of the total elements, as detailed in Appendix~\ref{app: real dataset details}).
}

All experiments were run on a Linux machine equipped with an Intel\textregistered~ Core\texttrademark~ i9-11900H CPU @ 2.50GHz and 62GB of memory. GCC version 9.4.0 was used for compilation, employing the \texttt{-O3} optimization flag.

\subsection{Computational Complexity of PCF Learned Sort}
\label{sec: Computational Complexity of PCF Learned Sort}

We meticulously counted the total number of basic operations for sorting the input array to observe the computational complexity of each sorting algorithm.
Here, the basic operations consist of four arithmetic operations, powers, comparisons, logical operations, assignments, and memory access.
We chose this metric, which counts basic operations, to mitigate the environmental dependencies observed in other metrics, such as CPU instructions and CPU time, which are heavily influenced by compiler optimizations and the underlying hardware.
This is the same idea as the metric selection in the experiment of ~\citep{zeighami2023distribution}.

The parameters of PCF Learned Sort are set as in \cref{thm: pcf nlogn,thm: pcf nloglogn}, $\alpha = \beta = \gamma = \delta = \lfloor n^{3/4} \rfloor$ and $\tau = 100$.
As the standard sort, we used QuickSort, which has an expected complexity of $\mathcal{O}(n\log n)$. 
\revise{For ease of implementation and straightforward measurement of comparison counts, we chose QuickSort.}
We compared our PCF Learned Sort against (plain) QuickSort, radix sort, and Learned Sort 2.0~\citep{kristo2021defeating-github}.
\revise{Learned Sort 2.0 was selected as the learned-method baseline because it allows relatively simple counting of operations (a broader comparison appears in the sorting time experiments of \cref{sec: experiments on sorting time}).}

\begin{figure*}[t]
    \centering
    \includegraphics[width=\linewidth]{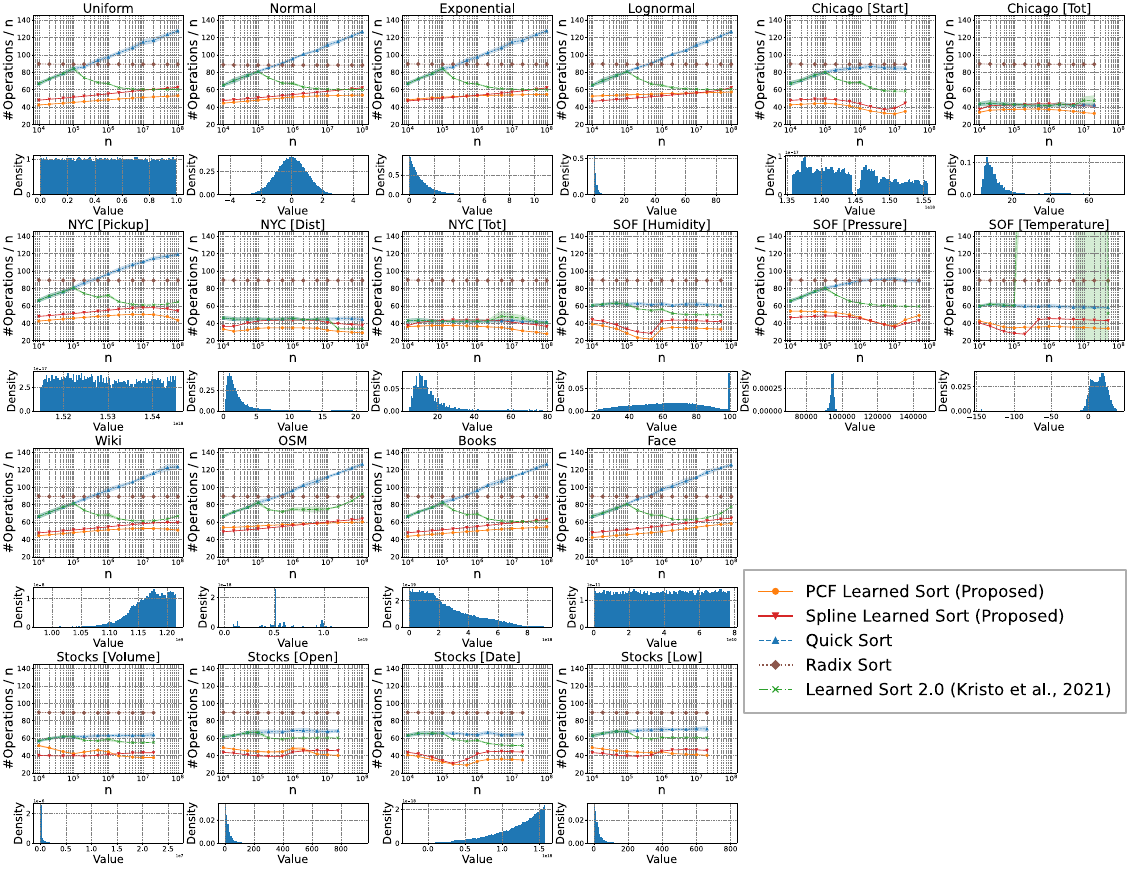}
    \caption{Number of operations to sort the array. Below each graph is a histogram visualizing the distribution of each dataset. The standard deviation of the $10$ measurements is represented by the shaded area. Our PCF Learned Sort consistently achieves a complexity lower than $\mathcal{O}(n \log n)$, while Learned Sort 2.0~\citep{kristo2021defeating}, which has $\mathcal{O}(n^2)$ worst-case complexity, occasionally requires huge operations.}
    \label{fig: n_operation}
\end{figure*}

\cref{fig: n_operation} shows the number of operations divided by the length of the input array, $n$.
Each point represents the average over 10 runs, with the shaded region indicating the standard deviation.
\revise{Note that the horizontal axis is logarithmic.
As a result, the curve for QuickSort (with $\mathcal{O}(n \log n)$ complexity) appears approximately linear in this plot for synthetic datasets and real-world datasets with few duplicates (i.e., NYC [Pickup], Wiki, OSM, Books, and Face).}
In contrast, the curve for PCF Learned Sort is nearly flat, suggesting a complexity significantly lower than $\mathcal{O}(n \log n)$.
PCF Learned Sort consistently requires the fewest operations across all conditions, achieving up to 2.8 times fewer operations than QuickSort.
\revise{These results confirm not only our theoretical analysis but also the robustness of PCF Learned Sort under assumption-violating scenarios.}

\revise{
For datasets with many duplicates, QuickSort, Learned Sort 2.0~\citep{kristo2021defeating}, and PCF Learned Sort all exhibit relatively few operations.
This behavior arises because highly duplicated datasets often result in buckets containing only one distinct value, allowing the algorithms to terminate early.
Even under such conditions, PCF Learned Sort performs fewer operations than the other algorithms.
}

The curve for radix sort is also flat but lies consistently above that of PCF Learned Sort.
This is due to the difference in partitioning strategies.
While the radix sort partitions at a fixed granularity, our PCF Learned Sort performs partitioning using intervals that are adaptively set by the learning model.

\revise{In particular, on the SOF [Temperature] dataset with medium input sizes ($2 \times 10^5 < n \leq 2 \times 10^7$), Learned Sort 2.0 incurs extremely high costs: while all other methods keep the number of operations per $n$ below 100, it can exceed $50{,}000$.
This shows that Learned Sort methods without worst-case guarantees may suffer from pathological overhead.
By contrast, our method maintains both expected and worst-case guarantees, ensuring robust performance across diverse datasets.}

\subsection{Confirmation of \cref{thm: pcf divide success prob}}
\label{sec: Confirmation of pcf divide success prob}

\begin{figure*}[t]
    \centering
    \includegraphics[width=0.96\linewidth]{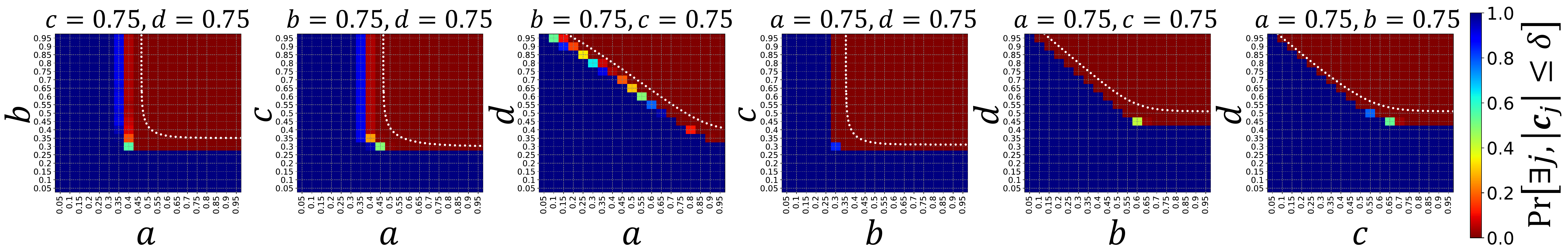}
    \caption{Heatmap showing the empirical frequency of bucketing failure, i.e., $\exists j, |\bm{c}_j| > \delta$. The variables $a,b,c,d$, except those on the x- and y-axes, were set to $0.75$. The white dotted line represents the parameters that make the right side of \cref{equ: pcf divide success prob} equal to 0.5. 
    The close alignment between this white dotted line and the actual success/fail boundery suggests the theoretical bound by is \cref{equ: pcf divide success prob} reasonably tight.}
    \label{fig: divide failure}
\end{figure*}

\cref{thm: pcf divide success prob} bounds the probability that a bucket of size greater than $\delta$ exists. 
This is an important lemma that allows us to guarantee the expected computational complexity of PCF Learned Sort.
Here, we empirically confirm that this upper bound is appropriate.

We have experimented with $\alpha=\lfloor n^a \rfloor,\beta=\lfloor n^b \rfloor,\gamma=\lfloor n^c \rfloor,\delta=\lfloor n^d \rfloor$, varying $a,b,c,d$ from 0.05 to 0.95 at 0.05 intervals.
For each $a,b,c,d$ setting, the following was repeated $100$ times: we took $n=10^6$ elements from the uniform distribution to form the input array and divided the array into $\gamma + 1$ buckets by $\mathcal{M}_\mathrm{PCF}$, and checked whether or not $\exists j, |\bm{c}_j| > \delta$.
Thus, for each $a,b,c,d \in \{0.05, 0.10, \dots, 0.95\}$, we obtained the empirical frequency at which bucketing ``fails.''

Heat maps in \cref{fig: divide failure} show the empirical frequency of bucketing failures when two of the $a,b,c,d$ parameters are fixed, and the other two parameters are varied.
The values of the two fixed variables are set to 0.75, e.g., in the upper left heap map of \cref{fig: divide failure} (horizontal axis is $a$ and vertical axis is $b$), $c=d=0.75$.
The white dotted line represents the parameter so that the right side of \cref{equ: pcf divide success prob} is 0.5.
That is, \cref{thm: pcf divide success prob} asserts that ``in the region upper right of the white dotted line, the probability of bucketing failure is less than 0.5.''

We observe that the white dotted line is close to (or slightly to the upper right of) the actual bound of whether bucketing ``succeeds'' or ``fails'' more often.
This suggests that the theoretical upper bound from \cref{thm: pcf divide success prob} agrees well (to some extent) with the actual probability.
We can also confirm that, as \cref{thm: pcf divide success prob} claims, the probability of bucketing failure is indeed small in the region upper right of the white line.

\subsection{Experiments on Sorting Time}
\label{sec: experiments on sorting time}

We empirically compare the sorting time of our PCF Learned Sort against several baselines.
Note that the metric used in~\cref{sec: Computational Complexity of PCF Learned Sort}, the number of operations, does not change depending on the machine or compilation method, but the sorting time does.
As the standard sort used in PCF Learned Sort, we used \texttt{std::sort}, which has a worst-case complexity of $\mathcal{O}(n\log n)$. 
We compared our PCF Learned Sort with \texttt{std::sort}, radix sort, \texttt{boost::sort::spreadsort::float\_sort} (Boost C++ implementation of Spreadsort~\citep{ross2002spreadsort}), and Learned Sort 2.0~\citep{kristo2021defeating-github}.
\revise{In addition, we evaluated more recent state-of-the-art learned sorting algorithms~\citep{ferragina2025balancedsort}---Balanced Learned Sort (BLS), Unbalanced Learned Sort (ULS), and Learned Sort 2.1---as well as IS4o~\citep{axtmann2017}, one of the state-of-the-art non-learned sequential sample sort algorithms.}

\begin{figure*}[t]
    \centering
    \includegraphics[width=\linewidth]{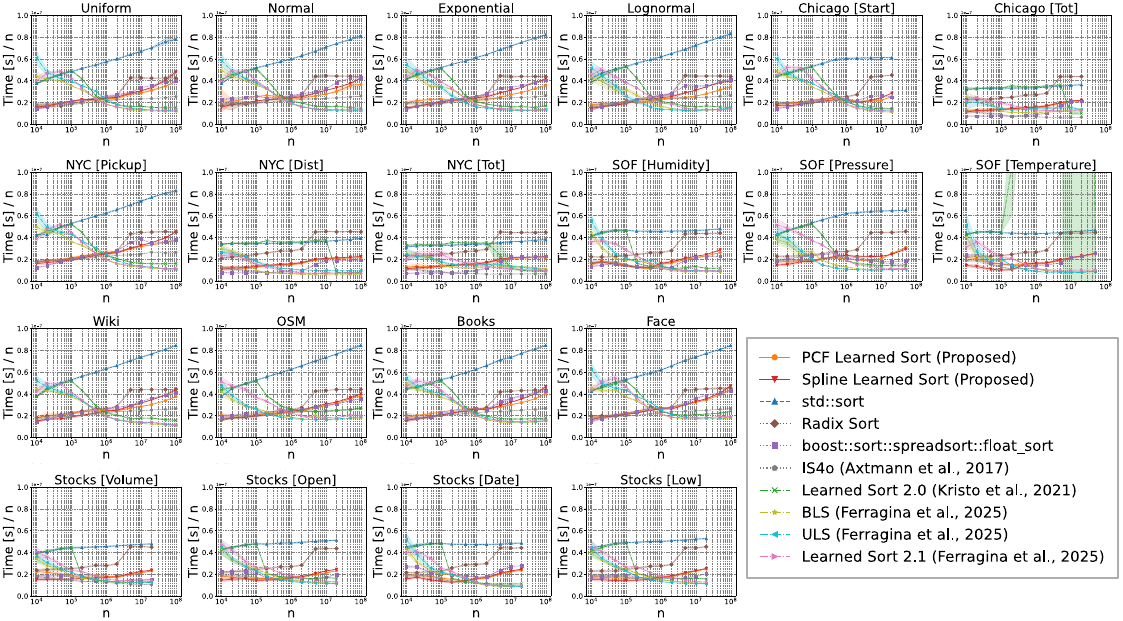}
    \caption{Time to sort the array. The standard deviation of the $10$ measurements is represented by the shaded area. Our PCF Learned Sort is significantly faster than \texttt{std::sort}, and more importantly, it maintains robust performance across all datasets, whereas other learned sorts without worst-case guarantees can suffer catastrophic slowdowns, as seen with Learned Sort 2.0 on the SOF [Temperature] dataset.}
    \label{fig: n_time}
\end{figure*}

\cref{fig: n_time} shows the sorting time divided by $n$.
It shows the mean and standard deviation of the $10$ measurements for each condition.
\revise{Note that the horizontal axis is logarithmic, and therefore the curve of \texttt{std::sort}, which has a complexity of $\mathcal{O}(n\log n)$, is almost linear in this plot for synthetic datasets and real-world datasets with few duplicates.}
In contrast, our PCF Learned Sort graph shows a relatively slow increase, suggesting that PCF Learned Sort has a complexity much smaller than $\mathcal{O}(n \log n)$.
We see that our PCF Learned Sort is up to 2.5 times faster than \texttt{std::sort}.
Furthermore, we find that for relatively large $n$ ($>10^6$), our PCF Learned Sort usually outperforms not only radix sort but also Spreadsort, an algorithm that cleverly incorporates the advantage of comparison sort into radix sort.

\revise{
The figure also shows that highly optimized methods like IS4o and other learned sorts often outperform PCF Learned Sort in average speed.
This is because these methods are highly optimized implementations that consider factors like cache efficiency, whereas our implementation prioritizes providing rigorous theoretical guarantees.
However, this speed comes at the cost of robustness: learned sorts without worst-case guarantees can suffer catastrophic slowdowns.
For example, on the SOF [Temperature] dataset with $n=2\times 10^7$, Learned Sort 2.0 took up to 326.4 seconds, while our method consistently finished in 0.45 seconds.
Moreover, as shown in our adversarial analysis (\cref{app: adversarial}), other recent learned sorts (BLS, ULS, Learned Sort 2.1) also exhibit vulnerabilities.
These results highlight the practical value of our theoretical guarantees.
}

\revise{
To better understand runtime behavior, we profiled PCF Learned Sort across its stages (\cref{app: runtime profiling}).
The results show that while the training stage is consistently lightweight across datasets and input sizes, the bucketing stage often becomes the bottleneck.
The dominant cost varies depending on both $n$ and the dataset characteristics, suggesting opportunities for further optimization.
}

% \revise{
% We also evaluated the sorting time under adversarial settings (\cref{app: adversarial}).
% The results show that existing learned sort algorithms without worst-case guarantees (Learned Sort 2.0~\citep{kristo2021defeating}, BLS, ULS, Learned Sort 2.1~\citep{ferragina2025balancedsort}) occasionally suffer from severe performance degradation, underscoring the value of our provable guarantees.
% }

\section{Discussion}
\label{sec: discussion}

\paragraph{Comparison with~\citep{zeighami2024theoretical}.}
Most recently, a concurrent work~\citep{zeighami2024theoretical} introduced a theoretical framework for learned database operations, including sorting, indexing, and cardinality estimation.
A key strength of their approach is the formal definition of ``distribution learnability'' and its applicability to a broad class of distributions, including those subject to distribution shifts.
Using this framework, they developed a Learned Sort algorithm with an expected running time of $O(n \log \log n)$ under certain distributional assumptions.
While their approach employs a bucketing-based sorting algorithm similar to ours, there are several key differences between their method and ours.

First, their algorithm relies on detailed prior knowledge about the distribution; it explicitly requires a parameter $\varkappa_2$, which represents the ``learning possibility'' of the distribution (see Definition 3.2 in~\citep{zeighami2024theoretical} for details).
Since $\varkappa_2$ depends on $\rho_1$ and $\rho_2$, their algorithm cannot be executed without knowing the values of $\rho_1$ and $\rho_2$ (or at least their lower and upper bounds).
In contrast, our algorithm does not require these values, making it more widely applicable.

Second, they do not provide experimental results.
Since estimating $\rho_1$ and $\rho_2$ from real data is challenging, empirical evaluation of their method is difficult.
On the other hand, since our algorithm does not depend on these specific values, it is easy to implement and evaluate experimentally.

Finally, their algorithm lacks a worst-case complexity guarantee.
In particular, there are cases where the algorithm may not terminate, making it impossible to give an upper bound on its worst-case complexity.
In contrast, we provide a formal worst-case complexity guarantee.

\paragraph{Limitations of the Current Theoretical Framework.}
Our proof of the $\mathcal{O}(n\log\log n)$ expected complexity (\cref{thm: pcf nloglogn}), does not extend to distributions with $f(x)=0$ or $\infty$.
A similar limitation is observed in Learned Indexes~\citep{zeighami2023distribution}, and extending the theory to cover a broader class of distributions remains an open direction in both Learned Indexes and Learned Sorts contexts.
One promising direction is to integrate more advanced CDF approximation methods with theoretical guarantees into our Learned Sort algorithm.
\revise{By adopting a refined CDF model and a bucketing algorithm that satisfies the conditions of \cref{thm: nloglogn}, it may be possible to achieve stronger theoretical guarantees.}
\revise{
\paragraph{Implementation Considerations and Optimizations.}
A straightforward implementation of PCF Learned Sort is not in-place, as it requires an auxiliary buffer for buckets nearly as large as the input.
Engineering techniques from highly optimized sample-sort implementations~\citep{axtmann2017,axtmann2022engineering} suggest in-place variants without changing the asymptotic structure of our algorithm.
Another important aspect is real-world efficiency: optimizing memory access patterns to enhance cache efficiency, reducing cache misses, and leveraging cache-aware data layouts could improve empirical performance without compromising theoretical guarantees.
Dynamically tuning parameters such as $\alpha$, $\beta$, $\gamma$, and $\delta$ based on the input distribution may yield further performance gains while maintaining guarantees.
}
\revise{
\paragraph{Parallelization Potential.}
Although our analysis targets the sequential setting, the structure of PCF Learned Sort admits parallelization at several stages: training the CDF model, computing bucket IDs, scattering into buckets, and sorting buckets are all amenable to data parallelism.
This observation is consistent with prior engineering work that integrates learned partitioning with high-performance (in-place) sample-sort pipelines and parallel learned-sorting frameworks~\citep{axtmann2017,carvalho2022towards,carvalho2023learnedsort}. 
These results complement our theoretical analysis and suggest opportunities for developing parallel learned-sorting algorithms with provable guarantees.
}

\section{Conclusion}
\label{sec: conclusion}

We proposed PCF Learned Sort, which provides guarantees on both its expected and worst-case complexities.
We then confirmed these computational complexities empirically on both synthetic and real data.
This is the first study to support the empirical success of Learned Sort theoretically and provides insight into why Learned Sort is fast.

\clearpage

% \subsubsection*{Broader Impact Statement}
% In this optional section, TMLR encourages authors to discuss possible repercussions of their work,
% notably any potential negative impact that a user of this research should be aware of. 
% Authors should consult the TMLR Ethics Guidelines available on the TMLR website
% for guidance on how to approach this subject.

% \subsubsection*{Author Contributions}
% If you'd like to, you may include a section for author contributions as is done
% in many journals. This is optional and at the discretion of the authors. Only add
% this information once your submission is accepted and deanonymized. 

% \subsubsection*{Acknowledgments}
% Use unnumbered third level headings for the acknowledgments. All
% acknowledgments, including those to funding agencies, go at the end of the paper.
% Only add this information once your submission is accepted and deanonymized. 

\bibliography{main}
\bibliographystyle{tmlr}

\clearpage

\appendix

\section{Proofs}

Here, we give the proofs omitted in the main paper.
In \cref{sec: proof of nlogn}, we give the proof of \cref{thm: nlogn}, which is important for proving the worst-case complexity of PCF Learned Sort.
\cref{sec: proof of nloglogn} and \cref{sec: Proofs for PCF Learned Sort} give proofs of \cref{thm: nloglogn} and \cref{thm: pcf divide success prob}, respectively, which are important for proving the expected computational complexity of PCF Learned Sort.
The quantization-aware version of \cref{thm: pcf divide success prob} and \cref{thm: pcf nloglogn} is defined and the proof is given in the \cref{sec: The quantization-aware version}.
\revise{Finally, we provide the proofs of the theorems guaranteeing the complexity of Spline Learned Sort in \cref{app: proofs spline learned sort}.}

\subsection{Proof of \cref{thm: nlogn}}
\label{sec: proof of nlogn}
\begin{proof}
Let $P(n)$ be the worst-case complexity of our Learned Sort when using the model-based bucketing algorithm $\mathcal{M}$ as assumed in \cref{thm: nlogn} and $\delta = \lfloor n^d \rfloor$.
Let $S(n)$ be the worst-case complexity of the standard sort and $R(n)$ be the worst-case complexity of model-based bucketing (including model training and inferences).
Since $S(n)=\mathcal{O}(n U(n))$ and the standard sort terminates after a finite number of operations,
\begin{equation}
    \exists ~ C_1, l_1 ~ (>0), ~~ n \geq 0   \;\;\Rightarrow\;\; S(n) \leq C_1 + l_1 n \revise{U(n)}.
\end{equation}
Since $R(n)=\mathcal{O}(n)$ and $\gamma + 1= \mathcal{O}(n)$, 
\begin{align} 
    \exists ~ n_2, l_2 ~ (>0), & ~~ n \geq n_2 \;\;\Rightarrow\;\; R(n) \leq l_2 n. \\
    \exists ~ n_3, l_3 ~ (>0), & ~~ n \geq n_3 \;\;\Rightarrow\;\; \gamma + 1 \leq l_3 n.
\end{align}

In the following, we prove \revise{$P(n)=\mathcal{O}(nU(n) + n\log\log n)$} by mathematical induction.

First, for $n < \max(n_2, n_3, \tau) \eqcolon n_0$, there exists a constant $C ~ (>0)$ such that $P(n) \leq C$. That is, for $n < n_0$, our Learned Sort terminates in a finite number of operations.
This is because, since $\delta < n$, the bucket will either be smaller than the original array length $n$, or the bucket will be immediately sorted by the standard sort.

Next, assume that there exists a constant $C ~(>0)$ and $l~(>0)$ such that \revise{$P(n) \leq C + l (nU(n) + n\log\log n)$} for all $n < k$, where $k$ is an integer such that $k \geq n_0$.
Without loss of generality, we assume that $C \geq C_1$ and $l \geq l_1$.
Let $\mathcal{S}_{\gamma, k}$ be the set consisting of all $(\gamma + 1)$-dimensional vectors of positive integers whose sum is $k$, i.e., $\mathcal{S}_{\gamma, k} \coloneq \left\{\bm{s} \in \mathbb{Z}_{\geq 0}^{\gamma} \mid \sum_{i=1}^{\gamma + 1} s_i = k\right\}$.
Then,
\revise{\begin{equation}
    \begin{split}
        P(k) & \leq R(k) + \max_{\bm{s} \in \mathcal{S}_{\gamma, k}} 
        \sum_{i=1}^{\gamma + 1} \left\{
            \mathbbm{1}[s_i \geq \lfloor k^d \rfloor] \cdot S(s_i) + 
            \mathbbm{1}[s_i < \lfloor k^d \rfloor] \cdot P(s_i)
        \right\} \\
        & \leq l_2 k + \max_{\bm{s} \in \mathcal{S}_{\gamma, k}} 
        \sum_{i=1}^{\gamma + 1} \left\{
            \mathbbm{1}[s_i \geq \lfloor k^d \rfloor] \cdot \left(C_1 + l_1 s_i U(s_i) \right) + 
            \mathbbm{1}[s_i < \lfloor k^d \rfloor] \cdot \left(C + l (s_i U(s_i) + s_i \log\log  s_i)\right)
        \right\} \\
        & \leq l_2 k + \max_{\bm{s} \in \mathcal{S}_{\gamma, k}} 
        \sum_{i=1}^{\gamma + 1} \left(
            C + l s_i U(s_i) + l s_i \log\log k^d
        \right) \\
        & \leq l_2 k + C (\gamma + 1) + l k U(k) + l k \log \log k^d \\
        & \leq l_2 k + C l_3 k + l k U(k) + l k \log \log k + lk \log d  \\
        & \leq \left(l_2 + C l_3 - l \log \frac{1}{d} \right) k + (C + l (k U(k) + k \log \log k)).
    \end{split}
\end{equation}}
Therefore, if we take $l$ such that
\begin{equation}
\label{eq: l lower bound}
    \revise{\frac{l_2 + C l_3}{\log \frac{1}{d}} \leq l},
\end{equation}
then \revise{$P(k) \leq C + l (k U(k) + k \log \log k)$} (note that the left side of \cref{eq: l lower bound} is a constant independent of $k$).

Hence, by mathematical induction, it is proved that there exists a constant $C ~(>0)$ and $l~(>0)$ such that $P(n) \leq C + l (k U(k) + l k \log \log k)$ for all $n \in \mathbb{N}$.
\end{proof}

\subsection{Proof of \cref{thm: nloglogn}}
\label{sec: proof of nloglogn}

\begin{proof}

The proof approach is the same as in \cref{thm: nlogn}, but in \cref{thm: nloglogn}, the ``expected'' computational complexity is bounded. The following two randomnesses are considered for computing the ``expected'' computational complexity: (i) the randomness with which $n$ elements are independently sampled according to the probability density function $f(x)$ in the process of forming the input array $\bm{x}$, and (ii) the randomness of the PCF Learned Sort algorithm sampling $\alpha$ elements from the input array $\bm{x}$ for training the PCF.

Let $T(n)$ be the expected complexity of our Learned Sort when using the model-based bucketing algorithm $\mathcal{M}$ as assumed in \cref{thm: nloglogn} and $\delta = \lfloor n^d \rfloor$.
Let $S(n)$ be the expected complexity of the standard sort and $R(n)$ be the expected complexity of model-based bucketing (including model training and inferences).
Since $S(n)=\mathcal{O}(n \log n)$ and the standard sort terminates after a finite number of operations,
\begin{equation}
    \exists ~ C_1, l_1 ~ (>0), ~~ n \geq 0   \;\;\Rightarrow\;\; S(n) \leq C_1 + l_1 n \log n.
\end{equation}
Since $R(n)=\mathcal{O}(n)$, $\Pr[\exists j, | \bm{c}_j | \geq \lfloor n^d \rfloor] = \mathcal{O}(1/\log n)$, and $\gamma + 1= \mathcal{O}(n)$,
\begin{align}
    \exists ~ n_2, l_2 ~ (>0), & ~~ n \geq n_2 \;\;\Rightarrow\;\; R(n) \leq l_2 n, \\
    \exists ~ n_3, l_3 ~ (>0), & ~~ n \geq n_3 \;\;\Rightarrow\;\; \Pr[\exists j, | \bm{c}_j | \geq \lfloor n^d \rfloor]  \leq \frac{l_3}{\log n}, \\
    \exists ~ n_4, l_4 ~ (>0), & ~~ n \geq n_4 \;\;\Rightarrow\;\; \gamma + 1 \leq l_4 n.
\end{align}

In the following, we prove $T(n)=\mathcal{O}(n\log\log n)$ by mathematical induction.

First, for $n < \max(n_2, n_3, n_4, \tau) \eqcolon n_0$, there exists a constant $C ~ (>0)$ such that $T(n) \leq C$. That is, for $n < n_0$, our Learned Sort terminates in a finite number of operations.

Next, assume that there exists a constant $C ~(>0)$ and $l~(>0)$ such that $T(n) \leq C + l n\log\log n$ for all $n < k$, where $k$ is an integer such that $k \geq n_0$.
Then, from $k \geq n_2$,
\begin{equation}
    \begin{split}
        T(k) & \leq R(k) + \mathbb{E}\left[\sum_{j=1}^{\gamma + 1} 
            \mathbbm{1}[ |\bm{c}_{j}| \geq \lfloor k^d \rfloor ] \cdot S(|\bm{c}_{j}|) + 
            \mathbbm{1}[ |\bm{c}_{j}| < \lfloor k^d \rfloor ] \cdot T(|\bm{c}_{j}|)
        \right] \\
        & \leq l_2 k + 
            \Pr\left[\exists j, | \bm{c}_j | \geq \lfloor k^d \rfloor\right] \cdot \mathbb{E}\left[ \sum_{j=1}^{\gamma + 1} S(|\bm{c}_{j}|) \right] +
            \mathbb{E}\left[ \sum_{j=1}^{\gamma + 1}  \mathbbm{1}[ |\bm{c}_{j}| < \lfloor k^d \rfloor ] \cdot T(|\bm{c}_{j}|) \right] \\
        & \leq l_2 k + \Pr[\exists j, | \bm{c}_j | \geq \lfloor k^d \rfloor] \cdot \{C_1 (\gamma + 1) + l_1 k \log k\} + \mathbb{E}\left[\sum_{i=1}^{\gamma + 1} T(\min(\lfloor k^d \rfloor, | \bm{c}_j |))\right].
    \end{split}
\end{equation}
Here, from $k \geq n_3$ and $k \geq n_4$,
\begin{equation}
    \begin{split}
        \Pr[\exists j, | \bm{c}_j | \geq \lfloor k^d \rfloor] \cdot \{C_1 (\gamma + 1) + l_1 k \log k\} & \leq \frac{l_3}{\log k} \cdot (C_1 l_4 k + l_1 k \log k) \\
        & \leq (C_1 l_3 l_4 + l_1 l_3) k.
    \end{split}
\end{equation}
Also, from the assumption of induction and $k \geq n_4$, 
\begin{equation}
    \begin{split}
        \mathbb{E}\left[\sum_{i=1}^{\gamma + 1} T(\min(\lfloor k^d \rfloor, | \bm{c}_j |))\right] & \leq \mathbb{E}\left[\sum_{i=1}^{\gamma + 1} \left\{C + l \cdot \min(\lfloor k^d \rfloor, | \bm{c}_j |) \log \log \min(\lfloor k^d \rfloor, | \bm{c}_j |) \right\} \right]\\
        & \leq \mathbb{E}\left[C(\gamma + 1) + \sum_{i=1}^{\gamma + 1} l \cdot | \bm{c}_j | \log \log \lfloor k^d \rfloor \right] \\
        & \leq C l_4 k + lk \log \log \lfloor k^d \rfloor\\
        & \leq C l_4 k + lk \log d + lk \log \log k.
    \end{split}
\end{equation}
Therefore,
\begin{equation}
    \begin{split}
        T(k) & \leq l_2 k + (C_1 l_3 l_4  + l_1 l_3) k + C l_4k + lk \log d + lk \log \log k \\
        & \leq \left\{l_2 + C_1 l_3 l_4  + l_1 l_3 + C l_4 - l \log \frac{1}{d} \right\} k + (C + lk \log \log k).
    \end{split}
\end{equation}
Therefore, if we take $l$ such that
\begin{equation}
\label{equ: l cond}
    \frac{l_2 + C_1 l_3 l_4  + l_1 l_3 + C l_4}{\log \frac{1}{d}} \leq l,
\end{equation}
then $T(k) \leq C + l k\log \log k$ (note that the left side of \cref{equ: l cond} is a constant independent of $k$).

Hence, by mathematical induction, it is proved that there exists a constant $C ~(>0)$ and $l~(>0)$ such that $T(n) \leq C + l n\log\log n$ for all $n \in \mathbb{N}$.
\end{proof}

\subsection{Proof of \cref{thm: pcf divide success prob}}
\label{sec: Proofs for PCF Learned Sort}

We first present the following lemma to prove \cref{thm: pcf divide success prob}.
\begin{lemma}
\label{thm: pcf divide success prob lemma}
Let $\bm{e} ~ (\in \mathcal{I}^n)$ be a sorted version of $\bm{x}_\mathcal{I} ~ (\in \mathcal{I}^n)$ and $\Delta \coloneqq (x_\mathrm{max} - x_\mathrm{min}) / \beta$ (where $x_\mathrm{min}$ and $x_\mathrm{max}$ are the minimum and maximum values of $\bm{x}_\mathcal{I}$, respectively).

Also, define the set $\mathcal{S}_{r}$ and $\mathcal{T}_{r}$ as follows ($r=1,\dots,n$):
\begin{equation}
    \mathcal{S}_{r} = \{k \mid e_{\max(1, r-\delta / 2)} + \Delta < e_k \leq e_r - \Delta \}, ~~~~~ \mathcal{T}_{r} = \{k \mid e_r + \Delta \leq e_k < e_{\min(r + \delta / 2, n)} - \Delta \}.
\end{equation}
Using this definition, define $Y_{jr}, Z_{jr}, Y_{r}, Z_{r}$ as follows ($j=1,\dots,\alpha, ~ r=1,\dots,n$):
\begin{equation}
    Y_{jr}=
    \begin{cases}
        1 & (j \in \mathcal{S}_{r}) \\
        0 & (\mathrm{else})
    \end{cases}, ~~~~~
    Z_{jr}=
    \begin{cases}
        1 & (j \in \mathcal{T}_{r}) \\
        0 & (\mathrm{else})
    \end{cases}, ~~~~~
    Y_{r} = \sum_{j=1}^{\alpha} Y_{jr}, ~~~~~
    Z_{r} = \sum_{j=1}^{\alpha} Z_{jr}
\end{equation}

When using $\mathcal{M}_\mathrm{PCF}$, if the size of the bucket to which $e_r$ is allocated is greater than or equal to $\delta$, the following holds:
\begin{equation}
\label{equ: pcf divide success prob lemma}
    \left(r \geq \frac{\delta}{2} + 1 ~ \land ~  Y_r \leq \left\lfloor \frac{\alpha}{\gamma} \right\rfloor\right)  ~ \lor ~
    \left(r \leq n - \frac{\delta}{2} ~ \land ~ Z_r \leq \left\lfloor \frac{\alpha}{\gamma} \right\rfloor\right).
\end{equation}
\end{lemma}
\begin{proof}

We prove the contraposition of the lemma. 
That is, we prove that $e_r$ is allocated to a bucket smaller than $\delta$ under the assumption that $\left(r < \frac{\delta}{2} + 1 ~ \lor ~  Y_r > \left\lfloor \frac{\alpha}{\gamma} \right\rfloor\right) ~ \land ~ \left(r > n - \frac{\delta}{2} ~ \lor ~ Z_r > \left\lfloor \frac{\alpha}{\gamma} \right\rfloor\right)$.

For convenience, we hypothetically define $e_0 = -\infty, e_{n+1} = \infty$, and assign $e_0$ to the 0th bucket and $e_{n+1}$ to the $(\gamma + 2)$-th bucket.
The size of the 0th bucket and the $(\gamma + 2)$-th bucket are both 1.

First, we prove that $e_r$ and $e_{\min\left(r + \delta / 2, n + 1\right)}$ are assigned to different buckets.
When $n - \delta / 2 < r \leq n$, $e_r$ and $e_{\min\left(r + \delta / 2, n + 1\right)} = e_{n+1}$ are obviously assigned to different buckets.
When $r \leq n - \delta / 2$, the ID of the bucket to which $e_r$ is assigned is
\begin{equation}
\label{eq: bucket ID 1}
    \begin{split}
        \lfloor \tilde{F}(e_r) \gamma \rfloor + 1 &= \left\lfloor \frac{\gamma}{\alpha} b_{i(e_r)} \right\rfloor + 1 \\
        &\leq \frac{\gamma}{\alpha} b_{i(e_r)} + 1 \\
        &= \frac{\gamma}{\alpha} \left| \{j \mid i(a_j) \leq i(e_r) \} \right| + 1 \\
        &\leq \frac{\gamma}{\alpha} \left| \{j \mid a_j \leq e_r + \Delta \} \right| + 1.
    \end{split}
\end{equation}
The ID of the bucket to which $e_{\min\left(r + \delta / 2, n + 1\right)} = e_{r + \delta / 2}$ is assigned is
\begin{equation}
\label{eq: bucket ID 2}
    \begin{split}
        \lfloor \tilde{F}(e_{r + \delta / 2}) \gamma \rfloor + 1 &= \left\lfloor \frac{\gamma}{\alpha} b_{i(e_{r + \delta / 2})} \right\rfloor + 1 \\
        &> \frac{\gamma}{\alpha} b_{i(e_{r + \delta / 2})} \\
        &= \frac{\gamma}{\alpha} \left| \{j \mid i(a_j) \leq i(e_{r + \delta / 2}) \} \right| \\
        &\geq \frac{\gamma}{\alpha} \left| \{j \mid a_j \leq e_{r + \delta / 2} - \Delta \} \right|.
    \end{split}
\end{equation}
Thus, taking the difference between these two bucket IDs,
\begin{equation}
    \begin{split}
        \left(\lfloor \tilde{F}(e_{r + \delta / 2}) \gamma \rfloor + 1\right) - \left( \lfloor \tilde{F}(e_r) \gamma \rfloor + 1 \right) &> \frac{\gamma}{\alpha} \left| \{j \mid a_j \leq e_{r + \delta / 2} - \Delta \} \right| - \left( \frac{\gamma}{\alpha} \left| \{j \mid a_j \leq e_r + \Delta \} \right| + 1 \right) \\
        &= \frac{\gamma}{\alpha} \left| \{ j \mid e_r + \Delta < a_j \leq e_{r + \delta / 2} - \Delta \} \right| - 1 \\
        &= \frac{\gamma}{\alpha} \left| \mathcal{T}_r \right| - 1 \\
        &= \frac{\gamma}{\alpha} \sum_{j=1}^{\alpha} Z_{jr} - 1\\
        &= \frac{\gamma}{\alpha} Z_r - 1 \\
        &\geq 0.
    \end{split}
\end{equation}
Therefore, $e_r$ and $e_{\min\left(r + \delta / 2, n + 1\right)}$ are assigned to different buckets. 

In the same way, we can prove that $e_{\max(0, r - \delta / 2)}$ and $e_{r}$ are also assigned to different buckets.
Thus, the size of the bucket to which $e_r$ is assigned is at most $\delta - 1$ (at most from $e_{\max(0, r - \delta / 2) + 1}$ to $e_{\min\left(r + \delta / 2, n + 1\right) - 1}$), and the contraposition of the lemma is proved.
\end{proof}

Using \cref{thm: pcf divide success prob lemma}, we can prove \cref{thm: pcf divide success prob}.
\begin{proof}
Let $q=\max\limits_{y} \int_{y}^{y+\Delta} f_\mathcal{I}(x) dx$ (where $y$ is a value such that $(y, y+\Delta) \subseteq \mathcal{I}$).
Then, from $\sigma_1 \leq f(x) \leq \sigma_2$ for all $x \in \mathcal{I}$,
\begin{equation}
    \begin{split}
        q &\leq \frac{\max_{y} \int_{y}^{y+\Delta} f(y) dy}{\int_{\mathcal{I}} f(x) dx} \\
        &\leq \frac{\max_{y} \int_{y}^{y+\Delta} \sigma_{2} dy}{\int_{\mathcal{I}} \sigma_{1} dx} \\
        &\leq \frac{\sigma_{2}\Delta}{\sigma_{1}(x_\mathrm{max} - x_\mathrm{min})} \\
        &= \frac{\sigma_{2}}{\sigma_{1}\beta}.
    \end{split}
\end{equation}

Thus, when $r \geq \frac{\delta}{2} + 1$,
\begin{equation}
    \begin{split}
        \mathbb{E}\left[ \frac{\delta}{2} - \left| \mathcal{S}_r \right| \right] 
        &= \mathbb{E}\left[ \frac{\delta}{2} - \left| \{k \mid e_{r-\delta / 2} + \Delta < e_k \leq e_r - \Delta \} \right| \right] \\
        &= \mathbb{E}\left[ \left| \{k \mid e_{r-\delta / 2} < e_k \leq e_{r-\delta / 2} + \Delta\} \right| \right] + 
           \mathbb{E}\left[ \left| \{k \mid e_r - \Delta < e_k \leq e_r \} \right| \right] \\
        &\leq nq + nq \\
        &\leq \frac{2 \sigma_2 n}{\sigma_1\beta}.
    \end{split}
\end{equation}
Thus, when $r \geq \frac{\delta}{2} + 1$,
\begin{equation}
    \begin{split}
        \mathbb{E}[Y_r] &= \frac{\alpha}{n} \mathbb{E}\left[\left| \mathcal{S}_r \right| \right] \\
        &= \frac{\alpha}{n} \left( \frac{\delta}{2} - \mathbb{E}\left[ \frac{\delta}{2} - \left| \mathcal{S}_r \right| \right] \right) \\
        &\geq \frac{\alpha\delta}{2n} - \frac{2 \sigma_2 \alpha}{\sigma_1\beta} \\
        &= \frac{\alpha K}{\gamma}.
    \end{split}
\end{equation}
Here, when $K \geq 1$, we have
\begin{equation}
    0 \leq 1 - \frac{\alpha}{\gamma \mathbb{E}[Y_r]} < 1.
\end{equation}
Therefore, from the Chernoff bound,
\begin{equation}
    \begin{split}
        \Pr\left[Y_r \leq \frac{\alpha}{\gamma}\right] 
        &= \Pr\left[Y_r \leq \left\{ 1 - \left( 1 - \frac{\alpha}{\gamma \mathbb{E}[Y_r]} \right) \right\}\mathbb{E}\left[Y_r\right]\right] \\
        &\leq \exp\left\{ -\frac{1}{2} \left( 1 - \frac{\alpha}{\gamma \mathbb{E}[Y_r]} \right)^2 \mathbb{E}\left[Y_r\right] \right\} \\
        &\leq \exp\left\{ -\frac{\alpha K}{2 \gamma} \left( 1 - \frac{1}{K} \right)^2 \right\}.
    \end{split}
\end{equation}
In the same way, we can prove that when $r \leq n - \frac{\delta}{2}$,
\begin{equation}
    \Pr\left[Z_r \leq \frac{\alpha}{\gamma}\right] 
    \leq \exp\left\{ -\frac{\alpha K}{2 \gamma} \left( 1 - \frac{1}{K} \right)^2 \right\}.
\end{equation}

Thus, by defining $E_r$ to be the event ``$e_r$ is allocated to a bucket with size greater than or equal to $\delta$,'' from \cref{thm: pcf divide success prob lemma},
\begin{equation}
    \begin{split}
        \Pr[E_r]
        &\leq \Pr\left[ \left(r \geq \frac{\delta}{2} + 1 ~ \land ~  Y_r \leq \left\lfloor \frac{\alpha}{\gamma} \right\rfloor\right)  ~ \lor ~
            \left(r \leq n - \frac{\delta}{2} ~ \land ~ Z_r \leq \left\lfloor \frac{\alpha}{\gamma} \right\rfloor\right) \right] \\
        &\leq \Pr\left[ \left(r \geq \frac{\delta}{2} + 1 ~ \land ~  Y_r \leq \left\lfloor \frac{\alpha}{\gamma} \right\rfloor\right) \right] + 
            \Pr\left[ \left(r \leq n - \frac{\delta}{2} ~ \land ~ Z_r \leq \left\lfloor \frac{\alpha}{\gamma} \right\rfloor\right) \right] \\
        &\leq \begin{cases}
            \exp\left\{ -\frac{\alpha K}{2 \gamma} \left( 1 - \frac{1}{K} \right)^2 \right\}, & \left(r < \frac{\delta}{2} + 1 ~ \lor ~ r > n - \frac{\delta}{2}\right) \\
            2 \exp\left\{ -\frac{\alpha K}{2 \gamma} \left( 1 - \frac{1}{K} \right)^2 \right\}, & (\mathrm{else})
        \end{cases} \\
        &\leq 2 \exp\left\{ -\frac{\alpha K}{2 \gamma} \left( 1 - \frac{1}{K} \right)^2 \right\}.
    \end{split}
\end{equation}
Therefore,
\begin{equation}
    \mathbb{E}\left[\sum_{r=1}^{n} \mathbbm{1}[ E_r ]\right]
    \leq 2n \exp\left\{ -\frac{\alpha K}{2 \gamma} \left( 1 - \frac{1}{K} \right)^2 \right\}
\end{equation}
Then, noting that the number of buckets with size greater than or equal to $\delta$ is less than or equal to $\sum_{r=1}^{n} \mathbbm{1}[ E_r ]/\delta$,
\begin{equation}
    \mathbb{E}\left[|\{j \mid |\bm{c}_j| > \delta \}|\right] \leq \frac{2n}{\delta} \exp\left\{ -\frac{\alpha K}{2 \gamma} \left( 1 - \frac{1}{K} \right)^2 \right\}.
\end{equation}
Then, from Markov's inequality, we have
\begin{equation}
    \begin{split}
        \Pr[\exists j, |\bm{c}_j| > \delta] 
        &= \Pr[|\{j \mid |\bm{c}_j| > \delta \}| \geq 1] \\
        &\leq \frac{2n}{\delta} \exp\left\{ -\frac{\alpha K}{2 \gamma} \left( 1 - \frac{1}{K} \right)^2 \right\}.
    \end{split}
\end{equation}
\end{proof}

\subsection{The Quantization-Aware Version of \cref{thm: pcf divide success prob} and \cref{thm: pcf nloglogn}}
\label{sec: The quantization-aware version}

In general, computers represent numbers in a finite number of bits, so the numbers they handle are inherently discrete. However, \cref{thm: pcf nloglogn}, which states that the expected computational complexity of PCF Learned Sort is $\mathcal{O}(n \log\log n)$, does not cover discrete distributions.
Here, we define the quantization process by which a computer represents numbers in finite bits and then show that the expected computational complexity of PCF Learned Sort is still $\mathcal{O}(n\log\log n)$, under the assumption that ``the quantization is fine enough.''

First, we assume the sampling and quantization process is as follows:
\begin{assumption}
\label{ass: quantization process}
    For a range of values $\mathcal{D} ~ (\subseteq \mathbb{R})$, define $m ~(\in \mathbb{N})$ contiguous regions $\mathcal{D}_1, \mathcal{D}_2, \dots, \mathcal{D}_m$ such that (i) they are disjoint from each other and (ii) together they form $\mathcal{D}$. For each region, determine representative values $r_1, r_2, \dots, r_m$. Here, $r_1, r_2, \dots, r_m$ are values contained in $\mathcal{D}_1, \mathcal{D}_2, \dots, \mathcal{D}_m$, respectively. For a value $x ~ (\in \mathcal{D})$, the quantized value of $x$, $x'$, is obtained as $x'=r_i$, where $i$ is the (only) $i$ such that $x \in \mathcal{D}_i$. The value $x$ is sampled according to the probability density function $f(x)$, but a computer keeps $x'$ (instead of $x$) in $\log_{2}m$ bits with some quantization error.
\end{assumption}

Also, for the interval $\mathcal{I} \subseteq \mathcal{D}$, we define $\eta(\mathcal{I})$ as follows.
\begin{definition}
    $\eta(\mathcal{I})$ is the maximum width of $\mathcal{D}_i$ that intersects with interval $\mathcal{I}$, i.e.,
    \begin{equation}
        \eta(\mathcal{I}) \coloneq \max \left\{|\mathcal{D}_i| \mid \mathcal{D}_i \cup \mathcal{I} \neq \varnothing, i = 1,\dots,m \right\},
    \end{equation}
    where $|\mathcal{D}_i|$ is the width of the range $\mathcal{D}_i$.
\end{definition}
We can prove that $\eta(\mathcal{I})$ is the upper bound of the quantization error of $x$ in $\mathcal{I}$, i.e., $|x-x'| \leq \eta(\mathcal{I})$ when $x \in \mathcal{I}$.

Now, the assumption that the quantization is ``fine enough'' is specifically defined as follows.
\begin{assumption}
\label{ass: fine enough quantization}
(Recall that our PCF Learned Sort recursively calls its own algorithm. Each time of recursion, the range of values of interest $\mathcal{I}$ changes, and the length of the array of interest $n$ also changes.) For all $\mathcal{I}$ and $n$ that appear in the algorithm, the following holds:
\begin{equation}
\label{equ: fine enough quantization}
    \frac{\beta \eta(\mathcal{I})}{|\mathcal{I}|} \leq \frac{1}{2}.
\end{equation}
\end{assumption}

We show intuitively and empirically that this is a satisfactory assumption.

First, to show intuitively that this assumption is satisfactory, we give an example.
Let $\mathcal{I}$ be a range of values that can be represented by a 64-bit double (1 bit for the sign, 11 bits for the exponent part, and 52 bits for the mantissa part), that is, $\mathcal{I} = [-1.79 \times 10^{308}, 1.79 \times 10^{308}]$. The quantization is performed by mapping each value to the nearest number that can be represented by a 64-bit double. In this setting,
$\eta(\mathcal{I}) = 1.99 \times 10^{292}$, $|\mathcal{I}| = 3.59 \times 10^{308}$.
$\eta(\mathcal{I}) / |\mathcal{I}| = 5.55 \times 10^{-17}$.
Thus, for usual $\beta$, $\beta \leq 9 \times 10^{15}$, \cref{equ: fine enough quantization} holds.

Second, we show that \cref{equ: fine enough quantization} is a satisfactory assumption empirically.
In the 1,280 measurements, where 10 measurements each for 16 different $n (\in\{10^3, 2\times 10^3, 5\times 10^3, \dots, 10^8\})$ on 8 different datasets, $5.23 \times 10^7$ pairs of $(\mathcal{I}, \beta)$ appeared (we set $\beta = \left\lfloor n^{3/4} \right\rfloor$), and the left side of \cref{equ: fine enough quantization} is at most $8.11 \times 10^{-9}$, indicating that \cref{equ: fine enough quantization} is always true with a margin.

Under this definition and assumption about quantization, we can prove the quantization-aware version of \cref{thm: pcf divide success prob}.
\begin{lemma}[Quantization-aware version of \cref{thm: pcf divide success prob}]
\label{thm: quantized pcf divide success prob}
Let $\sigma_{1}$ and $\sigma_{2}$ be respectively the lower and upper bounds of the probability density distribution $f(x)$ in $\mathcal{D}$, and assume that $0 < \sigma_{1} \leq \sigma_{2} < \infty$. That is, $x\in\mathcal{D} \;\;\Rightarrow\;\; \sigma_{1} \leq f(x) \leq \sigma_{2}$. Also, let $\bm{x}'_{\mathcal{I}}$ be the array created by the quantization of $\bm{x} ~ (\in \mathcal{I}^n)$ in the manner defined in \cref{ass: quantization process}.

Then, in model-based bucketing of $\bm{x}'_{\mathcal{I}} ~ (\in \mathcal{I}^n)$ to $\{\bm{c}_{j}\}_{j=1}^{\gamma + 1}$ using $\mathcal{M}_\mathrm{PCF}$, the following holds for any interval $\mathcal{I} ~ (\subseteq \mathcal{D})$ under \cref{ass: fine enough quantization}:
\begin{equation}
% \label{equ: pcf divide success prob}
    K \coloneqq \frac{\gamma\delta}{2n} - \frac{4 \sigma_2 \gamma}{\sigma_1 \beta} \geq 1
    \;\;\Rightarrow\;\; \Pr[\exists j, | \bm{c}_{j} | > \delta] \leq 
    \frac{2n}{\delta} \exp\left\{ -\frac{\alpha K'}{2 \gamma} \left( 1 - \frac{1}{K'} \right)^2 \right\}.
\end{equation}
\end{lemma}

The proof of \cref{thm: quantized pcf divide success prob} is done in the same way as \cref{thm: pcf divide success prob}. That is, we first prove the following lemma.
\begin{lemma}[Quantization-aware version of \cref{thm: quantized pcf divide success prob lemma}]
\label{thm: quantized pcf divide success prob lemma}
Let $\bm{e}' ~ (\in \mathcal{I}^n)$ be a sorted version of $\bm{x}'_\mathcal{I} ~ (\in \mathcal{I}^n)$ and $\Delta \coloneqq (x'_\mathrm{max} - x'_\mathrm{min}) / \beta$ (where $x'_\mathrm{min}$ and $x'_\mathrm{max}$ are the minimum and maximum values of $\bm{x}'_\mathcal{I}$, respectively).

Also, define the set $\mathcal{S}'_{r}, \mathcal{T}'_{r}$ as follows ($r=1,\dots,n$):
\begin{equation}
    \mathcal{S}'_{r} = \{k \mid e_{\max(1, r-\delta / 2)} + \Delta + 2 \eta(\mathcal{I}) < e_k \leq e_r - \Delta - 2 \eta(\mathcal{I}) \},
\end{equation}
\begin{equation}
    \mathcal{T}'_{r} = \{k \mid e_r + \Delta + 2 \eta(\mathcal{I}) \leq e_k < e_{\min(r + \delta / 2, n)} - \Delta - 2 \eta(\mathcal{I})\}.
\end{equation}
Using this definition, define $Y'_{jr}, Z'_{jr}, Y'_{r}, Z'_{r}$ as follows ($j=1,\dots,\alpha, ~ r=1,\dots,n$):
\begin{equation}
    Y'_{jr}=
    \begin{cases}
        1 & (j \in \mathcal{S}'_{r}) \\
        0 & (\mathrm{else})
    \end{cases}, ~~~~~
    Z'_{jr}=
    \begin{cases}
        1 & (j \in \mathcal{T}'_{r}) \\
        0 & (\mathrm{else})
    \end{cases}, ~~~~~
    Y'_{r} = \sum_{j=1}^{\alpha} Y'_{jr}, ~~~~~
    Z'_{r} = \sum_{j=1}^{\alpha} Z'_{jr}.
\end{equation}

If the size of the bucket to which $e'_r$ is allocated is greater than or equal to $\delta$, then the following holds:
\begin{equation}
    \left(r \geq \frac{\delta}{2} + 1 ~ \land ~  Y'_r \leq \left\lfloor \frac{\alpha}{\gamma} \right\rfloor\right)  ~ \lor ~
    \left(r \leq n - \frac{\delta}{2} ~ \land ~ Z'_r \leq \left\lfloor \frac{\alpha}{\gamma} \right\rfloor\right).
\end{equation}
\end{lemma}
\begin{proof}
The proof method is exactly the same as for \cref{thm: pcf divide success prob lemma}. That is, by taking the difference between the IDs of the buckets to which $e'_{r + \delta / 2}$ and $e'_{r}$ are assigned,
\begin{equation}
    \begin{split}
        & \left(\lfloor \tilde{F}(e'_{r + \delta / 2}) \gamma \rfloor + 1\right) - \left( \lfloor \tilde{F}(e'_r) \gamma \rfloor + 1 \right) \\
        > &\frac{\gamma}{\alpha} \left| \{j \mid a_j \leq e_{r + \delta / 2} - \Delta - 2\eta(\mathcal{I}) \} \right| - \left( \frac{\gamma}{\alpha} \left| \{j \mid a_j \leq e_r + \Delta + 2\eta(\mathcal{I})\} \right| + 1 \right) \\
        = &\frac{\gamma}{\alpha} \left| \{j \mid e_r + \Delta + 2\eta(\mathcal{I}) < a_j \leq e_{r + \delta / 2} - \Delta - 2\eta(\mathcal{I}) \} \right| - 1  \\
        = &\frac{\gamma}{\alpha} Z'_r - 1 \\
        \geq &0,
    \end{split}
\end{equation}
when $Z'_r > \left\lfloor \frac{\alpha}{\gamma}\right\rfloor$. 
Thus, we can prove that when
\begin{equation}
    \left(r < \frac{\delta}{2} + 1 ~ \lor ~  Y'_r > \left\lfloor \frac{\alpha}{\gamma} \right\rfloor\right)  ~ \land ~
    \left(r > n - \frac{\delta}{2} ~ \lor ~ Z'_r > \left\lfloor \frac{\alpha}{\gamma} \right\rfloor\right)
\end{equation}
holds, $e'_{r + \delta / 2}$ and $e'_r$ are assigned to the different bucket and $e'_{r - \delta / 2}$ and $e'_r$ are assigned to the different bucket.
Then, the contraposition of the lemma is proved.
\end{proof}

Using \cref{thm: quantized pcf divide success prob lemma}, we can prove \cref{thm: quantized pcf divide success prob}.
\begin{proof}
Let $q'=\max\limits_{y} \int_{y}^{y+\Delta2\eta(\mathcal{I})} f_\mathcal{I}(x) dx$ (where $y$ is a value such that $(y, y+\Delta+2\eta(\mathcal{I}) \subseteq \mathcal{I}$).
Then, from $\sigma_1 \leq f(x) \leq \sigma_2$ for all $x \in \mathcal{I}$,
\begin{equation}
    \begin{split}
        q' &\leq \frac{\max_{y} \int_{y}^{y+\Delta+2\eta(\mathcal{I})} f(y) dy}{\int_{\mathcal{I}} f(x) dx} \\
        &\leq \frac{\max_{y} \int_{y}^{y+\Delta+2\eta(\mathcal{I})} \sigma_{2} dy}{\int_{\mathcal{I}} \sigma_{1} dx} \\
        &= \frac{\sigma_{2}(\Delta+2\eta(\mathcal{I}))}{\sigma_{1}(x'_\mathrm{max} - x'_\mathrm{min})} \\
        &\leq \frac{\sigma_{2}}{\sigma_{1}\beta} \cdot \left(1 + \frac{2 \beta\eta(\mathcal{I})}{\left|\mathcal{I}\right|}\right) \\
        &\leq \frac{2\sigma_{2}}{\sigma_{1}\beta}.
    \end{split}
\end{equation}
The last inequality is obtained by \cref{ass: fine enough quantization}.

Thus, when $r \geq \frac{\delta}{2} + 1$,
\begin{equation}
    \begin{split}
        \mathbb{E}\left[ \frac{\delta}{2} - \left| \mathcal{S}'_r \right| \right] 
        &= \mathbb{E}\left[ \frac{\delta}{2} - \left| \{k \mid e_{r-\delta / 2} + \Delta + 2\eta(\mathcal{I}) < e_k \leq e_r - \Delta - 2\eta(\mathcal{I}) \} \right| \right] \\
        &= \mathbb{E}\left[ \left| \{k \mid e_{r-\delta / 2} < e_k \leq e_{r-\delta / 2} + \Delta + 2\eta(\mathcal{I})\} \right| \right] + 
           \mathbb{E}\left[ \left| \{k \mid e_r - \Delta - 2\eta(\mathcal{I}) < e_k \leq e_r \} \right| \right] \\
        &\leq nq' + nq' \\
        &\leq \frac{4 \sigma_2 n}{\sigma_1\beta}.
    \end{split}
\end{equation}
Thus, when $r \geq \frac{\delta}{2} + 1$,
\begin{equation}
    \begin{split}
        \mathbb{E}[Y'_r] &= \frac{\alpha}{n} \mathbb{E}\left[\left| \mathcal{S}'_r \right| \right] \\
        &= \frac{\alpha}{n} \left( \frac{\delta}{2} - \mathbb{E}\left[ \frac{\delta}{2} - \left| \mathcal{S}'_r \right| \right] \right) \\
        &\geq \frac{\alpha\delta}{2n} - \frac{4 \sigma_2 \alpha}{\sigma_1\beta} \\
        &= \frac{\alpha K'}{\gamma}.
    \end{split}
\end{equation}

From this point forward, by proceeding in exactly the same way as the proof of \cref{thm: pcf divide success prob}, we can prove the following using \cref{thm: pcf divide success prob lemma}:
\begin{equation}
    K' \geq 1
    \;\;\Rightarrow\;\; \Pr[\exists j, | \bm{c}_{j} | > \delta] \leq 
    \frac{2n}{\delta} \exp\left\{ -\frac{\alpha K'}{2 \gamma} \left( 1 - \frac{1}{K'} \right)^2 \right\}.
\end{equation}
\end{proof}

Using \cref{thm: quantized pcf divide success prob}, we can prove the following theorem.
\begin{theorem}[Quantization-aware version of \cref{thm: pcf nloglogn}]
\label{thm: quantized pcf nloglogn}
Let $\sigma_{1}$ and $\sigma_{2}$ be the lower and upper bounds, respectively, of the probability density distribution $f(x)$ in $\mathcal{D}$, and assume that $0 < \sigma_{1} \leq \sigma_{2} < \infty$.
Also, assume that the input array is quantized in a way that satisfies \cref{ass: quantization process} and \cref{ass: fine enough quantization}.
Then, the expected complexity of PCF Learned Sort with $\mathcal{M}_\mathrm{PCF}$ as the bucketing method and $\alpha=\beta=\gamma=\delta=\lfloor n^{3/4} \rfloor$ is $\mathcal{O}(n\log\log n)$.
\end{theorem}
\begin{proof}
When $\alpha = \beta = \gamma = \lfloor n^{3/4} \rfloor$, the computational complexity for model-based bucketing is $\mathcal{O}(n)$.
Since $K'=\Omega(\sqrt{n})$ when $\alpha = \beta = \gamma = \delta = \lfloor n^{3/4} \rfloor$, $K'\geq1$ for sufficiently large $n$, and
\begin{equation}
    \frac{2n}{\delta} \exp\left\{ -\frac{\alpha K'}{2 \gamma} \left( 1 - \frac{1}{K'} \right)^2 \right\} = \mathcal{O}(n^{\frac{1}{4}} \exp(-\sqrt{n})) \leq \mathcal{O}\left(\frac{1}{\log n}\right).
\end{equation}
Therefore, from \cref{thm: nloglogn} and \cref{thm: quantized pcf divide success prob}, the expected computational complexity of PCF Learned Sort is $\mathcal{O}(n \log \log n)$.
\end{proof}

\revise{
\subsection{Proofs for Spline Learned Sort}
\label{app: proofs spline learned sort}
}

\revise{
\paragraph{Definition of Spline Learned Sort.}
We now formally define \emph{Spline Learned Sort}, which we propose as a variant of PCF Learned Sort.  
In particular, Spline Learned Sort replaces $\mathcal{M}_\mathrm{PCF}$---the bucketing method that uses PCF as the CDF model---with $\mathcal{M}_\mathrm{Spline}$, a bucketing method based on a spline-based CDF model.  
Below, we give a rigorous description of $\mathcal{M}_\mathrm{Spline}$.}

\revise{
The training algorithm of the CDF model in $\mathcal{M}_\mathrm{Spline}$ is the same as in $\mathcal{M}_\mathrm{PCF}$.  
That is, the parameters $\alpha, \beta \in \mathbb{N}$ are determined by $n$.  
Then, following this setting, we sample an array $\bm{a}$ of length $\alpha$ from the input array $\bm{x}$, which is subsequently bin-counted (as described in \cref{sec: PCF Learned Sort}).}

\revise{
For later explanation and proofs, we introduce additional terminology.  
We define the vector of thresholds used for counting as $\bm{t}\in\mathbb{R}^{\beta+2}$: $t_i = x_\mathrm{min} + (x_\mathrm{max}-x_\mathrm{min})(i-1)/\beta$ for $i=1,2,\dots,\beta+2$.
Thus, the $i$-th bin during CDF model training corresponds to the interval $[t_i,t_{i+1})$.  
Accordingly, the non-decreasing, non-negative array $\bm{b}$ formed by bin-counting is given by
$b_i=\left| \left\{ j \in \{1,\dots,\alpha\} \mid a_j < t_{i+1} \right\} \right|$ for $i=1,2,\dots,\beta+1$.
We also denote by $F_{\bm{a}}(x)$ the empirical CDF of the sampled array $\bm{a}$:
$F_{\bm{a}}(x)=\left|\{j\in\{1,2,\dots,\alpha\}\mid a_j\leq x\}\right|/\alpha$.}

\revise{
The inference algorithm of the CDF model in $\mathcal{M}_\mathrm{Spline}$ is similar to but slightly different from that in $\mathcal{M}_\mathrm{PCF}$.  
In PCF, the prediction is constant within each interval $[t_i,t_{i+1})$.  
In contrast, in the spline-based case, the prediction is obtained by linearly interpolating the empirical CDF values at the endpoints of the interval.  
Specifically, the output $\tilde{F}(x)$ for input $x$ is obtained as follows.  
As in PCF, we compute $i(x)=\left\lfloor \frac{(x - x_{\min})\beta}{x_{\max} - x_{\min}} \right\rfloor + 1$, so that $x \in [t_{i(x)}, t_{i(x)+1})$.
While PCF returns $F_{\bm{a}}(t_{i(x)+1}) = b_{i(x)}/\alpha$ as the CDF prediction, the spline-based method linearly interpolates between $F_{\bm{a}}(t_{i(x)})$ and $F_{\bm{a}}(t_{i(x)+1})$:
\begin{align}
    \tilde{F}(x) &= F_{\bm{a}}(t_{i(x)}) + \frac{x-t_{i(x)}}{t_{i(x)+1}-t_{i(x)}} \cdot (F_{\bm{a}}(t_{i(x)+1})-F_{\bm{a}}(t_{i(x)})) \\
    &= \frac{1}{\alpha} \left(b_{i(x)-1} + \frac{x-t_{i(x)}}{t_{i(x)+1}-t_{i(x)}} \cdot (b_{i(x)} - b_{i(x)-1})\right),
\end{align}
with $b_0=0$.}

\revise{
Unlike PCF, which only considers which bin a value falls into, the spline-based method also accounts for the relative position of the value within the bin to compute the predicted CDF $\tilde{F}(x)$.  
The asymptotic computational complexity for training and inference is the same as PCF: $\mathcal{O}(n)$ for training and $\mathcal{O}(1)$ per element for inference.  
However, unlike PCF (which only requires array lookups), the spline-based approach requires additional subtractions and multiplications during inference, leading to a larger constant factor in runtime.}

\revise{
\paragraph{Theorems Guaranteeing the Complexity of Spline Learned Sort.}
When $\mathcal{M}_\mathrm{Spline}$ is used as the bucketing method in our learned sort framework, the same complexity guarantees as in PCF Learned Sort hold.  
In particular, we obtain the following worst-case guarantee (Theorem~\ref{thm: spline nlogn}) and expected-case guarantee (Theorem~\ref{thm: spline nloglogn}).}

\revise{
\begin{theorem}[Spline version of \cref{thm: pcf nlogn}]
\label{thm: spline nlogn}
If $\mathcal{M}_\mathrm{Spline}$ is the bucketing method, the worst-case complexity of the standard sort is $\mathcal{O}(nU(n))$ (where $U(n)$ is a non-decreasing function), and $\alpha=\beta=\gamma=\delta=\lfloor n^{3/4} \rfloor$, then the worst-case complexity of Spline Learned Sort is $\mathcal{O}(nU(n) + n \log \log n)$.
\end{theorem}
\begin{proof}
The proof is identical to that of \cref{thm: pcf nlogn}, using \cref{thm: nlogn}.
\end{proof}
}

\revise{
\begin{theorem}[Spline version of \cref{thm: pcf nloglogn}]
\label{thm: spline nloglogn}
Let $\sigma_{1}$ and $\sigma_{2}$ be the lower and upper bounds, respectively, of the probability density $f(x)$ on $\mathcal{D}$, and assume $0 < \sigma_{1} \leq \sigma_{2} < \infty$.  
If $\mathcal{M}_\mathrm{Spline}$ is the bucketing method, the expected complexity of the standard sort is $\mathcal{O}(n \log n)$, and $\alpha=\beta=\gamma=\delta=\lfloor n^{3/4} \rfloor$, then the expected complexity of Spline Learned Sort is $\mathcal{O}(n \log \log n)$.
\end{theorem}
}

\revise{
To prove this theorem, we first establish the following lemma.}

\revise{
\begin{lemma}[Spline version of \cref{thm: pcf divide success prob lemma}]
\label{thm: spline divide success prob lemma}
Define $\bm{e}, \mathcal{S}_{r}, \mathcal{T}_{r}, Y_{jr}, Z_{jr}, Y_{r}, Z_{r}$ exactly as in \cref{thm: pcf divide success prob lemma} ($j=1,\dots,\alpha, ~ r=1,\dots,n$).  
That is, let $\bm{e} \in \mathcal{I}^n$ be the sorted version of $\bm{x}_\mathcal{I} \in \mathcal{I}^n$ and set $\Delta \coloneqq (x_\mathrm{max} - x_\mathrm{min}) / \beta$ (where $x_\mathrm{min}$ and $x_\mathrm{max}$ are the minimum and maximum values of $\bm{x}_\mathcal{I}$, respectively).  
Also define
\begin{equation}
    \mathcal{S}_{r} = \{k \mid e_{\max(1, r-\delta / 2)} + \Delta < e_k \leq e_r - \Delta \}, ~~~~~ \mathcal{T}_{r} = \{k \mid e_r + \Delta \leq e_k < e_{\min(r + \delta / 2, n)} - \Delta \},
\end{equation}
and
\begin{equation}
    Y_{jr}=
    \begin{cases}
        1 & (j \in \mathcal{S}_{r}) \\
        0 & (\mathrm{otherwise})
    \end{cases}, ~~~~~
    Z_{jr}=
    \begin{cases}
        1 & (j \in \mathcal{T}_{r}) \\
        0 & (\mathrm{otherwise})
    \end{cases}, ~~~~~
    Y_{r} = \sum_{j=1}^{\alpha} Y_{jr}, ~~~~~
    Z_{r} = \sum_{j=1}^{\alpha} Z_{jr}.
\end{equation}
When using $\mathcal{M}_\mathrm{Spline}$, if the size of the bucket to which $e_r$ is assigned is at least $\delta$, then the same condition as in \cref{equ: pcf divide success prob lemma} holds; namely,
\begin{equation}
    \left(r \geq \tfrac{\delta}{2} + 1 ~ \land ~  Y_r \leq \left\lfloor \tfrac{\alpha}{\gamma} \right\rfloor\right)  ~ \lor ~
    \left(r \leq n - \tfrac{\delta}{2} ~ \land ~ Z_r \leq \left\lfloor \tfrac{\alpha}{\gamma} \right\rfloor\right).
\end{equation}
\end{lemma}}
\revise{
\begin{proof}
The proof proceeds exactly as in \cref{thm: pcf divide success prob lemma}, by contraposition.  
That is, assume $\left(r < \tfrac{\delta}{2} + 1 ~ \lor ~  Y_r > \lfloor \tfrac{\alpha}{\gamma} \rfloor\right) ~ \land ~ \left(r > n - \tfrac{\delta}{2} ~ \lor ~ Z_r > \lfloor \tfrac{\alpha}{\gamma} \rfloor\right)$, and prove that $e_r$ is assigned to a bucket smaller than $\delta$.
\\
First, we show that $e_r$ and $e_{\min(r + \delta / 2, n + 1)}$ are assigned to different buckets.  
As in the proof of \cref{thm: pcf divide success prob lemma}, the case $n - \delta / 2 < r \leq n$ is immediate; hence we focus on $r \leq n - \delta / 2$.  
The bucket ID of $e_r$ is $\lfloor \tilde{F}(e_r) \gamma \rfloor + 1$, while that of $e_{r + \delta / 2}$ is $\lfloor \tilde{F}(e_{r + \delta / 2}) \gamma \rfloor + 1$.  
By the definition of the spline-based CDF model and the inequalities $t_{i(e_r)+1} \leq e_r + \Delta, ~ t_{i(e_{r + \delta / 2})} \geq e_{r + \delta / 2} - \Delta$, we obtain
\begin{align}
    \tilde{F}(e_r) &\leq F_{\bm{a}}(t_{i(e_r)+1}) = \tfrac{1}{\alpha}\left| \{j \mid a_j \leq t_{i(e_r)+1}\} \right| \leq \tfrac{1}{\alpha}\left| \{j \mid a_j \leq e_r + \Delta \} \right|, \\
    \tilde{F}(e_{r + \delta / 2}) &\geq F_{\bm{a}}(t_{i(e_{r + \delta / 2})}) = \tfrac{1}{\alpha}\left| \{j \mid a_j \leq t_{i(e_{r + \delta / 2})}\} \right| \geq \tfrac{1}{\alpha}\left| \{j \mid a_j \leq e_{r + \delta / 2} - \Delta\} \right|.
\end{align}
Hence, their bucket IDs satisfy
\begin{align}
    \lfloor \tilde{F}(e_r) \gamma \rfloor + 1 &\leq \tfrac{\gamma}{\alpha}\left| \{j \mid a_j \leq e_r + \Delta \} \right| + 1, \\
    \lfloor \tilde{F}(e_{r + \delta / 2}) \gamma \rfloor + 1 &> \tfrac{\gamma}{\alpha}\left| \{j \mid a_j \leq e_{r + \delta / 2} - \Delta\} \right|,
\end{align}
which are identical to \cref{eq: bucket ID 1,eq: bucket ID 2}.  
Thus,
\begin{equation}
    \left(\lfloor \tilde{F}(e_{r + \delta / 2}) \gamma \rfloor + 1\right) - \left(\lfloor \tilde{F}(e_r) \gamma \rfloor + 1\right) > 0,
\end{equation}
and hence $e_r$ and $e_{r + \delta / 2}$ are assigned to different buckets.  
\\
By a symmetric argument, $e_{\max(0, r - \delta / 2)}$ and $e_r$ are also assigned to different buckets.  
Therefore, the bucket containing $e_r$ has size at most $\delta - 1$, completing the contraposition.
\end{proof}
}

\revise{
From Lemma~\ref{thm: spline divide success prob lemma}, we obtain the following theorem.}

\revise{
\begin{theorem}[Spline version of \cref{thm: pcf divide success prob}]
\label{thm: spline divide success prob}
Let $\sigma_{1}$ and $\sigma_{2}$ be the lower and upper bounds of the probability density $f(x)$ on $\mathcal{D}$, and assume $0 < \sigma_{1} \leq \sigma_{2} < \infty$ (i.e., $\sigma_{1} \leq f(x) \leq \sigma_{2}$ for all $x \in \mathcal{D}$).  
Then, in model-based bucketing of $\bm{x}_{\mathcal{I}} \in \mathcal{I}^n$ into $\{\bm{c}_{j}\}_{j=1}^{\gamma + 1}$ using $\mathcal{M}_\mathrm{Spline}$, the following holds for any interval $\mathcal{I} \subseteq \mathcal{D}$:
\begin{equation}
\label{equ: spline divide success prob}
    K \coloneqq \frac{\gamma\delta}{2n} - \frac{2 \sigma_2 \gamma}{\sigma_1 \beta} \geq 1
    \;\;\Rightarrow\;\; \Pr[\exists j, | \bm{c}_{j} | > \delta] \leq 
    \frac{2n}{\delta} \exp\left\{ -\frac{\alpha K}{2 \gamma} \left( 1 - \tfrac{1}{K} \right)^2 \right\}.
\end{equation}
\end{theorem}
\begin{proof}
The proof is identical to that of \cref{thm: pcf divide success prob}, using Lemma~\ref{thm: spline divide success prob lemma}.
\end{proof}}

\revise{
Finally, using Theorem~\ref{thm: spline divide success prob}, we can prove Theorem~\ref{thm: spline nloglogn}.
\begin{proof}[Proof of \cref{thm: spline nloglogn}]
The proof follows exactly the same steps as \cref{thm: pcf nloglogn}, applying Theorem~\ref{thm: spline divide success prob} to \cref{thm: nloglogn}.
\end{proof}}

\revise{
\section{Real Dataset Details}
\label{app: real dataset details}
}

\begin{table}[t]
\centering
\revise{
\caption{Statistics of real-world datasets used in our experiments.}
\label{tab:datasets}
\begin{tabular}{@{}lrrr@{}} \toprule
    Dataset & Size & \# Unique Values & \% of Duplicates \\ \midrule
    Chicago [Start]~\citep{chic}       & 39{,}588{,}772  &   216{,}610 &  99.45 \\
    Chicago [Tot]~\citep{chic}         & 39{,}199{,}154  &     6{,}887 &  99.98 \\
    NYC [Pickup]~\citep{nyc}           &100{,}000{,}000  & 26{,}666{,}741 &  73.33 \\
    NYC [Dist]~\citep{nyc}             &200{,}107{,}656  &     2{,}060 & 100.00 \\
    NYC [Tot]~\citep{nyc}              &199{,}964{,}459  &     7{,}428 & 100.00 \\
    SOF [Humidity]~\citep{sof}         & 96{,}318{,}228  &     8{,}128 &  99.99 \\
    SOF [Pressure]~\citep{sof}         & 96{,}317{,}180  &   862{,}502 &  99.10 \\
    SOF [Temperature]~\citep{sof}      & 96{,}432{,}856  &     5{,}638 &  99.99 \\
    Wiki~\citep{sosd-vldb}             &200{,}000{,}000  & 90{,}437{,}011 &  54.78 \\
    OSM~\citep{sosd-vldb}              &800{,}000{,}000  &799{,}469{,}195 &   0.07 \\
    Books~\citep{sosd-vldb}            &800{,}000{,}000  &799{,}994{,}961 &   0.00 \\
    Face~\citep{sosd-vldb}             &199{,}998{,}000  &199{,}998{,}000 &   0.00 \\
    Stocks [Volume]~\citep{stks}       & 27{,}596{,}686  &   325{,}654 &  98.82 \\
    Stocks [Open]~\citep{stks}         & 27{,}596{,}686  &   830{,}758 &  96.99 \\
    Stocks [Date]~\citep{stks}         & 27{,}596{,}686  &    14{,}646 &  99.95 \\
    Stocks [Low]~\citep{stks}          & 27{,}596{,}686  &   863{,}701 &  96.87 \\
\bottomrule
\end{tabular}
}
\end{table}

\revise{
In the main text, we provided only a concise list of the real datasets used in our experiments.  
Here we describe each dataset in more detail.
\begin{itemize}
    \item \textbf{Chicago [Start, Tot]}: The Chicago Taxi Trips dataset includes taxi trips reported to the City of Chicago in its role as a regulatory agency over the last six years. The data to be sorted includes trip starting timestamps and total fare amounts.
    \item \textbf{NYC [Pickup, Dist, Tot]}: The New York City yellow taxi trip dataset includes trip pickup datetimes, trip distances, and total fare amounts.
    \item \textbf{SOF [Humidity, Pressure, Temperature]}: The Sofia dataset contains time-series air quality metrics (humidity, pressure, and temperature) measured at 1-minute intervals from outdoor sensors in Sofia, Bulgaria.
    \item \textbf{Wiki}: The Wikipedia dataset contains article edit timestamps~\citep{sosd-vldb}.
    \item \textbf{OSM}: Uniformly sampled OpenStreetMap locations represented as Google S2 CellIds~\citep{sosd-vldb}.
    \item \textbf{Books}: Book sale popularity data from Amazon~\citep{sosd-vldb}.
    \item \textbf{Face}: The FB dataset contains an upsampled set of Facebook user IDs obtained via random walks on the FB social graph~\citep{sosd-vldb}. As in \citep{kristo2021defeating,ferragina2025balancedsort}, the outliers greater than the 0.99999 quantile are discarded.
    \item \textbf{Stocks [Volume, Open, Date, Low]}: The Stocks dataset contains historical daily opening, low prices, trading volumes, and dates for all NASDAQ tickers (stocks and ETFs), retrieved via the yfinance Python package up to April 1, 2020.
\end{itemize}}

\revise{
Furthermore, the detailed statistics of these datasets, including their sizes, the number of unique values, and the percentage of duplicates, are summarized in \cref{tab:datasets}.  
The percentage of duplicates is computed as $100 \times \left(1 - \frac{|\,\text{unique\_values}\,|}{n}\right)$, where $n$ is the total number of elements and \texttt{unique\_values} is the set of distinct elements.
In particular, for several datasets (Chicago [Start, Tot], NYC [Dist, Tot], SOF [Humidity, Pressure, Temperature], and Stocks [Volume, Open, Date, Low]), the number of unique values is extremely small, accounting for less than 3.2\% of the total elements.}

\revise{
\section{Experiments in Adversarial Environments}
\label{app: adversarial}
}

\begin{figure*}[t]
    \centering
    \includegraphics[width=\linewidth]{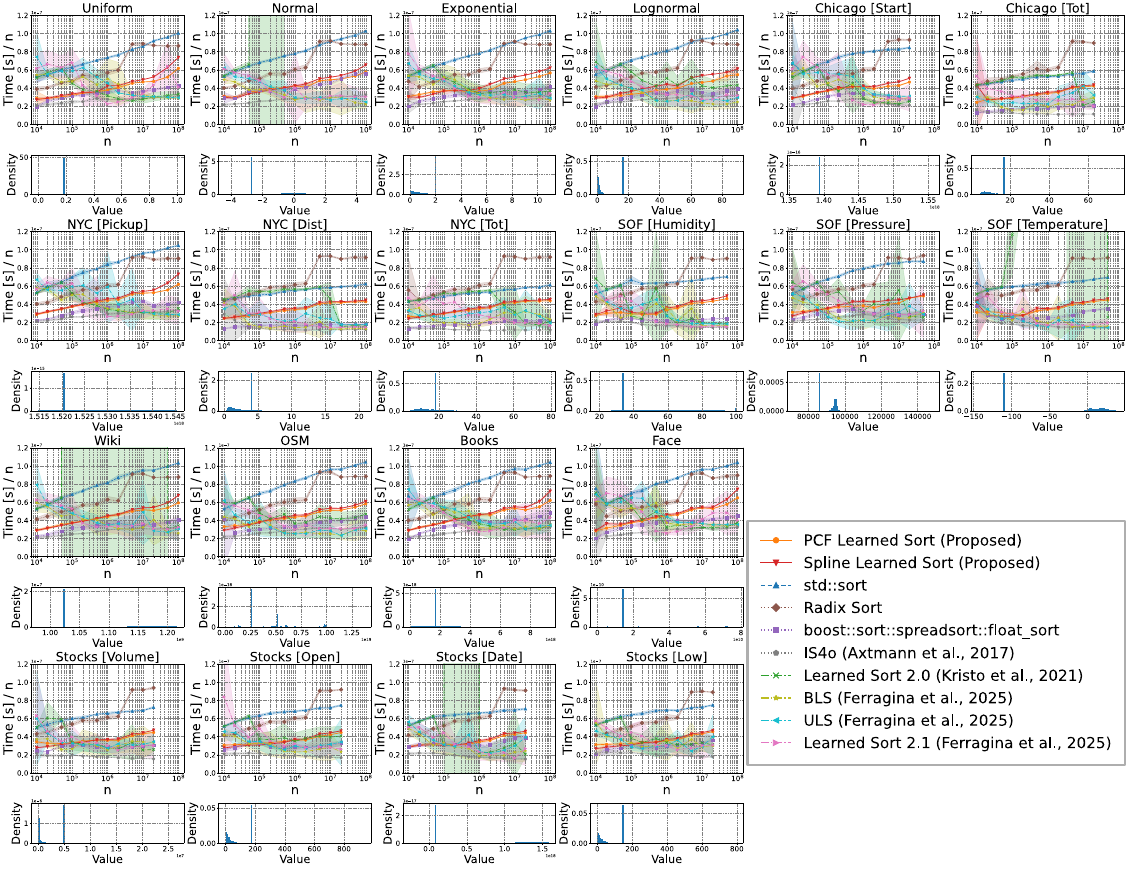}
    \revise{
    \caption{Time to sort the array in adversarial environments. Below each graph is a histogram that visualizes the distribution of each dataset. The standard deviation of the $10$ measurements is represented by the shaded area.}
    \label{fig: n_time_adv}
    }
\end{figure*}

\revise{
To evaluate the robustness of our Learned Sort, we conducted experiments under adversarially constructed inputs that explicitly violate the assumptions of \cref{thm: nloglogn}.  
Specifically, given an original dataset of size $n$, we injected $n$ duplicate elements of a randomly chosen value inside the range $[x_\mathrm{min}, x_\mathrm{max}]$, where $x_\mathrm{min}$ and $x_\mathrm{max}$ is the minimun and maximum value of the input array $\bm{x}$.
As a result, the array length doubled to $2n$, and the constructed distribution contained a point mass of probability one-half, leading to a huge probability density at that value.
This setting intentionally breaks the assumption $\sigma_2 < \infty$ required in \cref{thm: nloglogn}.}

\revise{
\cref{fig: n_time_adv} reports the sorting time on such adversarial datasets.  
The results show that the expected bound of $\mathcal{O}(n \log \log n)$ for our PCF Learend Sort no longer holds once the assumptions are violated, and the performance of PCF Learned Sort occasionally degrades to the level of \texttt{std::sort}.
However, we consistently observed that PCF Learned Sort never exceeded the worst-case complexity of $\mathcal{O}(n \log n)$, and thus its runtime remained comparable to \texttt{std::sort}.}

\revise{
On the other hand, we found that Learned Sort 2.0~\citep{kristo2021defeating-github}, which has $\mathcal{O}(n^2)$ worst-case complexity, sometimes exhibited severe slowdowns.
For example, on data of size $n = 10^5$ sampled from a Normal distribution with injected duplicates, the sorting time of Learned Sort 2.0 reached up to 10.89 seconds, whereas PCF Learned Sort required at most 0.021 seconds.}

\revise{
We also found that BLS, ULS, and Learned Sort 2.1~\citep{ferragina2025balancedsort}, which also have $\mathcal{O}(n^2)$ worst-case complexity, sometimes exhibited substantial performance degradation. For example, on data of size $n = 10^6$ sampled from a Normal distribution with injected point masses, Learned Sort 2.1 took up to 0.154 seconds, compared with 0.042 seconds for PCF Learned Sort and 0.078 seconds for \texttt{std::sort}.}

\revise{
These results highlight the fragility of sorting algorithms without strong worst-case complexity guarantees, a limitation shared by many existing learned sorting algorithms.
They underscore the importance of developing learned sorting algorithms---such as our PCF Learned Sort---that combine practical efficiency with rigorous worst-case complexity guarantees.}

\revise{
\section{Runtime Profiling}
}
\label{app: runtime profiling}

\begin{figure*}[t]
    \centering
    \includegraphics[width=\linewidth]{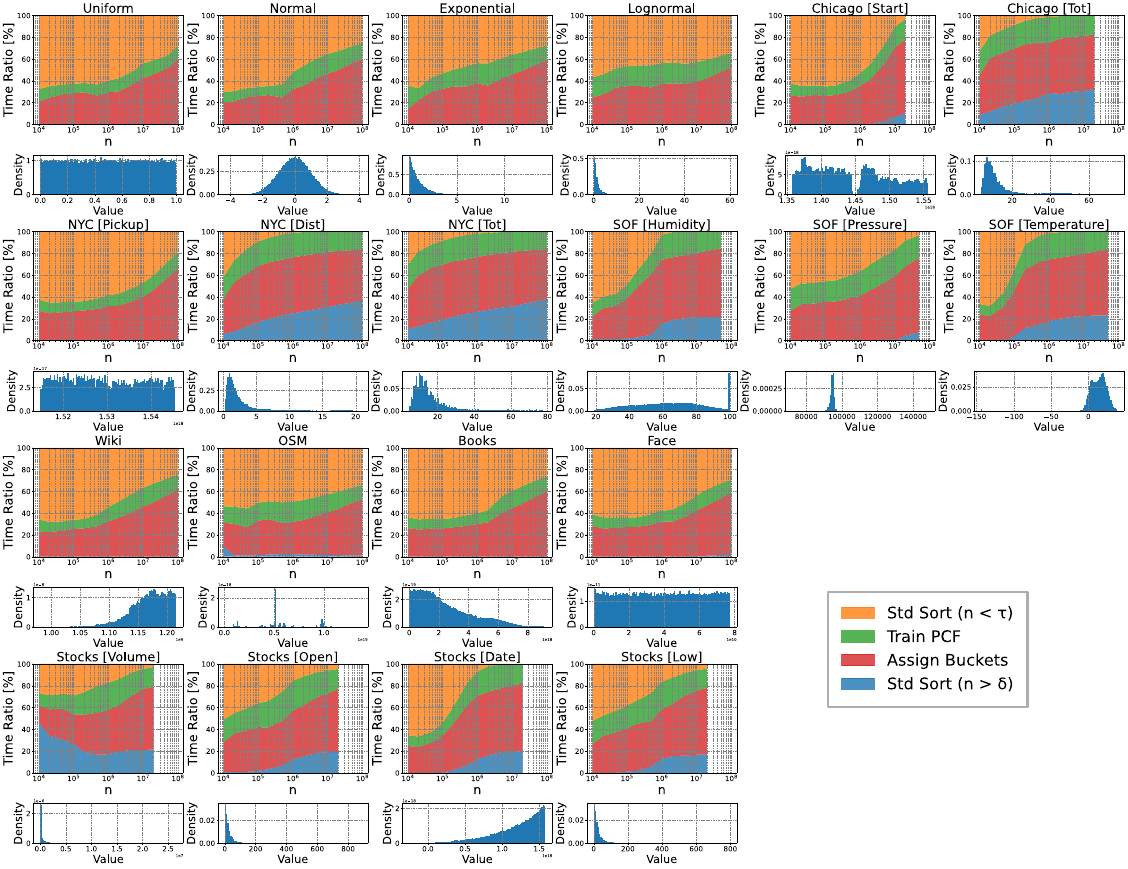}
    \revise{
    \caption{Breakdown of runtime components in PCF Learned Sort. The total runtime is decomposed into (i) model training, (ii) bucketing, (iii) standard sort applied to buckets smaller than $\tau$, and (iv) standard sort applied to buckets larger than $\delta$.}
    \label{fig: profile_all}
    }
\end{figure*}

\begin{figure*}[t]
    \centering
    \includegraphics[width=\linewidth]{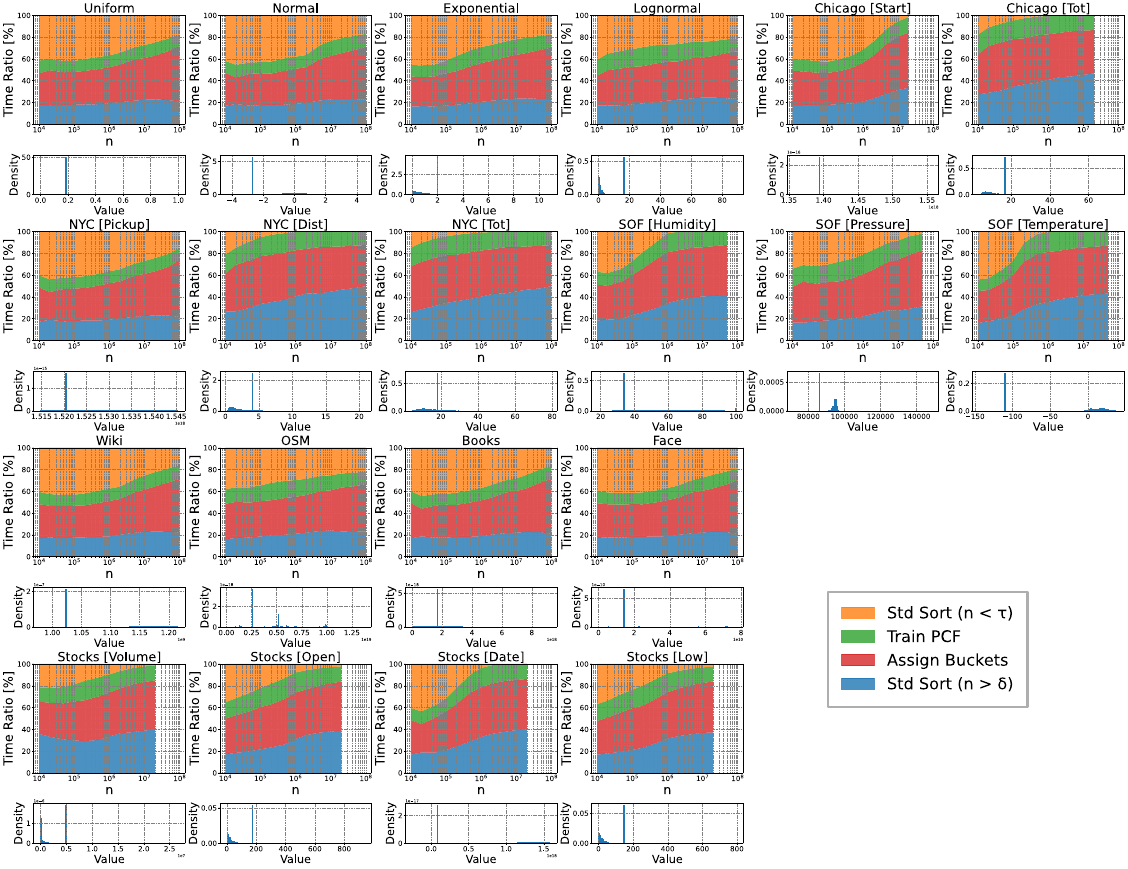}
    \revise{
    \caption{Breakdown of runtime components in PCF Learned Sort in adversarial environments. The total runtime is decomposed into (i) model training, (ii) bucketing, (iii) standard sort applied to buckets smaller than $\tau$, and (iv) standard sort applied to buckets larger than $\delta$.}
    \label{fig: profile_adv}
    }
\end{figure*}

\revise{
To better understand the practical behavior of PCF Learned Sort, we profiled the runtime of its major components:  
(i) model training,  
(ii) bucketing,
(iii) standard sort applied to buckets that are too small ($<\tau$), and
(iv) standard sort applied to buckets that are too large ($\geq \delta$).
The results are shown in \cref{fig: profile_all}.}

\revise{
For small input sizes $n$, the majority of the runtime is typically spent on applying the standard sort to buckets smaller than $\tau$.  
As $n$ increases, the relative cost shifts, and most of the runtime is instead consumed by the bucketing stage.  
Model training remains consistently lightweight, comparable to or faster than bucketing, and never exceeds roughly one quarter of the total runtime.  
Thus, training does not become the dominant factor in practice.}

\revise{
\cref{fig: profile_adv} shows the results for adversarially clustered inputs (as described in \cref{app: adversarial}).
Under the adversarial setting, the bucketing algorithm more frequently produces large buckets, which are then handled by standard sort.  
In these cases, a larger fraction of the runtime is attributed to this fallback mechanism.}

\revise{
Overall, these analyses clarify which components of PCF Learned Sort dominate the runtime under different conditions.  
They also highlight that the algorithm adapts gracefully: training and bucketing scale well, while the worst-case fallback ensures robustness without catastrophic slowdowns.}

\end{document}